\setlist[itemize]{leftmargin=*}
\newtheorem{theorem}{Theorem}
\newtheorem{lemma}{Lemma}
\newtheorem{corollary}{Corollary}
\theoremstyle{definition}
\newtheorem{definition}{Definition}
\newtheorem{example}{Example}
\renewcommand{\paragraph}[1]{\smallskip\noindent\textbf{#1}}
\title{Automated Mechanism Design for Classification with Partial Verification}
\author{
    Hanrui Zhang \\
    Duke University \\
    \texttt{hrzhang@cs.duke.edu}
\and
    Yu Cheng \\
    University of Illinois at Chicago \\
    \texttt{yucheng2@uic.edu}
\and
    Vincent Conitzer \\
    Duke University \\
    \texttt{conitzer@cs.duke.edu}
}
\begin{document}

\maketitle

\begin{abstract}
    We study the problem of automated mechanism design with partial verification, where each type can (mis)report only a restricted set of types (rather than any other type), induced by the principal's limited verification power.
    We prove hardness results when the revelation principle does not necessarily hold, as well as when types have even minimally different preferences.
    In light of these hardness results, we focus on truthful mechanisms in the setting where all types share the same preference over outcomes, which is motivated by applications in, e.g., strategic classification.
    We present a number of algorithmic and structural results, including an efficient algorithm for finding optimal deterministic truthful mechanisms, which also implies a faster algorithm for finding optimal randomized truthful mechanisms via a characterization based on convexity.
    We then consider a more general setting, where the principal's cost is a function of the combination of outcomes assigned to each type.
    In particular, we focus on the case where the cost function is submodular, and give generalizations of essentially all our results in the classical setting where the cost function is additive.
    Our results provide a relatively complete picture for automated mechanism design with partial verification.
\end{abstract}

\section{Introduction}

Agents are often {\em classified} into a variety of categories, some more desirable than others.  Loan applicants might be classified in various categories of risk, determining the interest they would have to pay.  University applicants may be classified into categories such as ``rejected,'' ``wait list,'' ``regular accept,'' and ``accept with honors scholarship.''  Meanwhile universities might themselves be classified into categories such as ``most competitive,'' ``highly competitive,'' etc.
In line with the language of {\em mechanism design} (often considered part of {\em game theory}), we assume that each agent (i.e., the entity being classified) has a {\em type}, corresponding to the agent's true financial situation, ability as a student, or competitiveness as a university.
This type is information that is private to the agent.
In most applications of mechanism design, the type encodes the agent's {\em preferences}.  For example, in an auction, an agent's type is how much he values the outcome where he wins the auction.
In contrast, in our setting, the type does not encode the agent's preferences: in the examples above, typically any agent has the same preferences over outcomes, regardless of the agent's true type.  Instead, the type is relevant to the objective function of the {\em principal} (the entity doing the classification), who wants to classify the agents into a class that fits their type.

Often, in mechanism design, it is assumed that an
agent of any type can report any other type (e.g., bid any value in an auction), and outcomes are based on these reports.
Under this assumption, our problem would be hopeless: 
every agent would always simply report whatever type gives the most favorable outcome, so we could not at all distinguish agents based on their true type.  But 
in our context this assumption is 
not sensible: while an agent may be able to take some actions that affect how its financial situation appears, it will generally not be possible for a person in significant debt and without a job to successfully imitate a wealthy person with a secure career.
This brings us into the less commonly studied domain of {\em mechanism design with partial verification}~\citep{green1986partially,yu2011mechanism}, in which not every type can misreport every other type.  That is, each type has certain other types that it can misreport.
A standard example in this literature is that it is possible to have arrived later than one really did, but not possible to have arrived earlier.
(In that case, the arrival time is the type.)
In this paper, however, we are interested in more complex misreporting (in)abilities.

What determines which types can misreport (i.e., successfully imitate) which other types?  This is generally specific to the setting at hand.
\citet{zhang2019samples} consider settings in which different types produce ``samples'' (e.g., timely payments, grades, admissions rates, ...) according to different distributions. They characterize which types can distinguish themselves from which other types in the long run, in a model in which agents can either (1) manipulate these samples before they are submitted to the principal, by either withholding transforming some of them in limited ways, or (2) choose the number of costly samples to generate~\citep{zhang2019samples,zhang2019distinguishing,zhang2021classification}.  In this paper, we will take as given which types can misreport which other types; this relation may result from applying the above characterization result, or from some other model.

Our goal is: given the misreporting relation, agents' preferences, and the principal's objective, can we efficiently compute the optimal (single-agent) mechanism/classifier, which assigns each report to an outcome/class?  This is a problem in {\em automated mechanism design}~\citep{conitzer2002complexity,conitzer2004self}, where the goal is to compute the optimal mechanism for the specific setting (outcome space, utility and objective functions, type distribution, ...) at hand.  Quite a bit is already known about the complexity of the automated mechanism design problem, and with partial verification, the problem is known to become even harder~\citep{auletta2011alternatives,yu2011mechanism,kephart2015complexity,kephart2016revelation}.  The structural advantage that we have here is that, unlike that earlier work, we are considering settings where all types have the same preferences over outcomes.  This allows us positive results that would otherwise not be available.

\subsection{Our Results and Techniques}

Throughout the paper, we assume agents have {\em utility} functions which they seek to maximize, and the principal has a {\em cost} function which she seeks to minimize.

\paragraph{General vs.\ truthful mechanisms.}
We first set out to investigate the problem of automated mechanism design with partial verification in the most general sense, where there is no restriction on each type's utility function. 
In light of previously known hardness results, although the most general problem is unlikely to be efficiently solvable, one may still hope to identify maximally nontrivial special cases for which efficient algorithms exist.
In order to determine the boundary of tractability, our first finding, Theorem~\ref{thm:np-hardness-without-revelation}, shows that when the revelation principle does not hold, 
it is $\mathsf{NP}$-hard to find an optimal (randomized or deterministic) mechanism even if (1) there are only $2$ outcomes and (2) all types share the same utility function.\footnote{The {\em revelation principle} states that if certain conditions hold on the reporting structure, then it is without loss of generality to focus on {\em truthful} mechanisms, in which agents are always best off revealing their true type.
We will discuss below a necessary and sufficient condition for the revelation principle to hold in our setting.}
In other words, without the revelation principle, no efficient algorithm exists even for the minimally nontrivial setting.
We therefore focus our attention on cases where the revelation principle holds, or, put in another way, on finding optimal  truthful mechanisms.

\paragraph{General vs.\ structured utility functions.}
The above result, as well as prior results on mechanism design with partial verification~\citep{auletta2011alternatives,yu2011mechanism,kephart2015complexity,kephart2016revelation}, paints a clear picture of intractability when the revelation principle does not hold.
But prior work also often suggests that this is indeed the boundary of tractability.
This is in fact true if we consider optimal randomized truthful mechanisms, which can be found by solving a linear program with polynomially many variables and constraints if the number of agents is constant \citep{conitzer2002complexity}.
However, as our second finding (Theorem~\ref{thm:np-hardness-with-general-utility}) shows, the case of {\em deterministic} mechanisms is totally different --- even with $3$ outcomes and single-peaked preferences over outcomes, it is still $\mathsf{NP}$-hard to find an optimal deterministic truthful mechanism (significantly improving over earlier hardness results for deterministic mechanisms~\citep{conitzer2002complexity,conitzer2004self}).
In other words, optimal deterministic truthful mechanisms are almost always hard to find whenever types have different preferences over outcomes.
This leads us to what appears to be the only nontrivial case left, i.e., where all types share the same preference over outcomes.
But this case is important: as discussed above, it in fact nicely captures a number of real-world scenarios of practical importance, and will be the focus in the rest of our results.

\paragraph{Efficient algorithm for deterministic mechanisms.}
Our first algorithmic result (Theorem~\ref{thm:additive-deterministic}) is an efficient algorithm for finding optimal deterministic truthful mechanisms with identical preferences in the presence of partial verification.
The algorithm works by building a directed capacitated graph, where each deterministic truthful mechanism corresponds bijectively to a finite-capacity $s$-$t$ cut.
The algorithm then finds an $s$-$t$ min-cut in polynomial time, which corresponds to a deterministic truthful mechanism with the minimum cost.

\paragraph{Condition for deterministic optimality and faster algorithm for randomized mechanisms.}
We then consider randomized mechanisms.
We aim to answer the following two natural questions.
\begin{itemize}
    \item In which cases is there a gap between optimal deterministic and randomized mechanisms, and how large can this gap be?
    \item While LP formulations exist for optimal randomized truthful mechanisms in general, is it possible to design theoretically and/or practically faster algorithms when types share the same utility function?
\end{itemize}
The answers to these questions turn out to be closely related.

For the first question, we show that the gap in general can be arbitrarily large (Example~\ref{ex:gap}).
On the other hand, there always exists an optimal truthful mechanism that is deterministic whenever the principal's cost function is convex with respect to the common utility function (Lemma~\ref{lem:additive-convex}).
In order to prove this, we show that without loss of generality, an optimal truthful mechanism randomizes only between two consecutive outcomes (when sorted by utility) for each type, and present a way to round any such mechanism into a deterministic truthful mechanism, preserving the cost in expectation.

For the second question, we give a positive answer, by observing that with randomization, essentially only the convex envelope of the principal's cost function matters.
This implies a reduction from finding optimal randomized mechanisms with general costs, to finding optimal randomized mechanisms with convex costs, and -- via our answer to the first question (Lemma~\ref{lem:additive-convex}) -- to finding optimal deterministic mechanisms with convex costs.
As a result, finding optimal randomized truthful mechanisms is never harder than finding optimal deterministic truthful mechanisms with convex costs.
Combined with our algorithm for the latter problem (Theorem~\ref{thm:additive-deterministic}), this reduction implies a theoretically and practically faster algorithm for finding optimal randomized truthful mechanisms when types share the same utility function.

\paragraph{Generalizing to combinatorial costs.}
With all the intuition developed so far, we then proceed to a significantly more general setting, where the principal's cost is a function of the combination of outcomes for each type, i.e., the principal's cost function is {\em combinatorial}.
This further captures global constraints for the principal, e.g., budget or headcount constraints.
We present combinatorial counterparts of essentially all our results for additive costs in Section~\ref{sec:submodular}.

\subsection{Further Related Work}

Some recent research along the line of automated mechanism design includes designing auctions from observed samples \citep{cole2014sample,devanur2016sample,balcan2018general,gonczarowski2018sample}, mechanism design via deep learning \citep{duetting2019optimal,shen2019automated}, and estimating incentive compatibility \citep{balcan2019estimating}.
Most of these results focus on auctions, while in this paper, we consider automated mechanism design in a more general sense (though we focus mostly on the types of setting discussed in the introduction, which have more of a classification focus).
More closely related are results on automated mechanism design with partial verification \citep{auletta2011alternatives,yu2011mechanism,kephart2015complexity,kephart2016revelation}.
Those results are about conditions under which the revelation principle holds (including a relevant condition to our setting discussed later), and the computational complexity of deciding whether there exists an implementation of a {\em specific} mapping from types to outcomes.
On the other hand, we focus on algorithms for designing cost-{\em optimal} truthful mechanisms, which is largely orthogonal to those results.

Another closely related line of research is strategic machine learning.
There, a common assumption is that utility-maximizing agents can modify their features in some restricted way, normally at some cost \citep{hardt2016strategic,kleinberg2019classifiers,haghtalab2020maximizing,zhang2021incentive} (see also \citep{kephart2015complexity,kephart2016revelation}).
Strategic aspects of linear regression have also been studied \citep{perote2004strategy,dekel2010incentive,chen2018strategyproof}.
Our results differ from the above in that we study strategic classification from a more general point of view, and do not put restrictions on the class of classifiers or learning algorithms to be used.

Another line of work in economics considers mechanism design with costly misreporting, where the cost is unobservable to the principal \citep{laffont1986using,mcafee1987competition}.
These results are incomparable with ours, since they consider rather specific models, while we consider utility and cost functions of essentially any form.

\section{Additive Cost over Types}
\label{sec:additive}
Consider the classical setting of Bayesian (single-agent) mechanism design, which is as follows.
The agent can have one of many possible \emph{types}.
The agent reports a type to the principal (which may not be his true type), and then the principal chooses an \emph{outcome}.
The principal does not know the type of the agent, but she has a prior probability distribution over the agent's possible types.
The principal has a different cost for each combination of a type and an outcome.
The goal of the principal is to design a mechanism (a mapping from reports to outcomes) to minimize her expected cost assuming the agent best-responds to (i.e., maximizes his utility under) the mechanism.
The principal aims to minimize her total cost over this population of agents, which is equal to the sum of her cost over individual agents.

In this section, we focus on the traditional setting where the principal's cost is additive over types.
In Section~\ref{sec:submodular}, we generalize our results to broader settings where the principal's cost function can be combinatorial (e.g., submodular) over types.

\paragraph{Notation.}
Let $\Theta$ be the agent's type space, and $\mathcal{O}$ the set of outcomes.
Let $n = |\Theta|$ and $m = |\mathcal{O}|$ be the numbers of types and outcomes respective.
Generally, we use $i \in \Theta$ to index types, and $j \in \mathcal{O}$ to index outcomes.
Let $\mathbb{R}_+ = [0, \infty)$.
We use $u_i: \mathcal{O} \to \mathbb{R}_+$ to denote the utility of a type $i$ agent, and $c_i: \mathcal{O} \to \mathbb{R}_+$ to denote the cost of the principal of assigning different outcomes to a type $i$ agent.

Let $R \subseteq \Theta \times \Theta$ denote all possible ways of misreporting, that is, a type $i$ agent can report type $i'$ if and only if $(i, i') \in R$.
We assume each type $i$ can always report truthfully, i.e., $(i, i) \in R$.
The principal specifies a (possibly randomized) mechanism $M: \Theta \to \mathcal{O}$, which maps reported types to (distributions over) outcomes.
The agent then responds to  maximize his expected utility under $M$.

Let $r_i$ denote the report of type $i$ when the agent best responds:
\[
    r_i \in \operatorname{argmax}_{i' \in \Theta, (i, i') \in R} \mathbb{E}[u_i(M(i'))].
\]
Without loss of generality, the principal's cost function can be scaled so that the prior distribution over possible types is effectively uniform.
The principal's cost under mechanism $M$ is then given by
\[
    c(M) = \sum_{i \in \Theta} \mathbb{E}\left[c_i\left(M\left(r_i\right)\right)\right]
\]
where both expectations are over the randomness in $M$.
Throughout the paper, given a set $S$, we use $\Delta(S)$ to denote the set of all distributions over $S$.

\subsection{Hardness without the Revelation Principle}
The well-known revelation principle states that when any type can report any other type, there always exists a truthful {\em direct-revelation} mechanism that is optimal for the principal.\footnote{A direct-revelation mechanism is a mechanism in which agents can only report their type, rather than sending arbitrary messages. A mechanism is truthful if it is always optimal for agents to report their true types.}
However, this is not true in the case of partial verification (see, e.g.,~\citep{green1986partially,yu2011mechanism,kephart2016revelation}).
In fact, it is known (see Theorem~4.10 of \citep{kephart2016revelation}) that in our setting, the revelation principle holds if and only if the reporting structure $R$ is transitive, i.e., for any types $i_1, i_2, i_3 \in \Theta$,
\[
    (i_1, i_2) \in R \text{ and } (i_2, i_3) \in R \implies (i_1, i_3) \in R.\footnote{
    To get some intuition for this characterization, suppose that $(i_1,i_2) \in R$,  $(i_2,i_3) \in R$, but  $(i_1,i_3) \notin R$, and we would like to accept $i_2$ and $i_3$ but not $i_1$.
    That is, higher types are better, and each type (except for the top one) can make itself look a bit, but not much, better than it is. 
    There is no truthful mechanism that achieves what we want: if we accept a report of $i_2$, we will end up accepting $i_1$ as well because it can misreport $i_2$. 
    On the other hand, if we accept only $i_3$, then we get what we want, by relying on $i_2$ to non-truthfully report $i_3$ (whereas $i_1$ cannot). 
    Hence, our goal can be achieved in a non-truthful implementation while it cannot be achieved in a truthful implementation, showing that the revelation principle does not hold in this case.}
\]
We begin our investigation by presenting a hardness result, which states that when the revelation principle does not hold, it is $\mathsf{NP}$-hard to find any optimal mechanism (even in the minimal nontrivial setting).

\begin{theorem}[$\mathsf{NP}$-hardness without the Revelation Principle]
\label{thm:np-hardness-without-revelation}
    When partial verification is allowed and the revelation principle does not hold, it is $\mathsf{NP}$-hard to find an optimal (randomized or deterministic) mechanism, even if there are only $2$ outcomes and all types share the same utility function.
\end{theorem}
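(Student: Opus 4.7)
The plan is a polynomial-time reduction from Minimum Vertex Cover. Given an instance $G=(V,E)$, I would build a mechanism-design instance with type space $\Theta=V\cup E\cup\{d_e:e\in E\}$, outcomes $\mathcal{O}=\{0,1\}$, and a single utility $u$ shared by all types with $u(0)<u(1)$. The reporting relation $R$ is the identity together with $(e,v)\in R$ for every endpoint $v$ of $e$, and $(d_e,e)\in R$ for each $e\in E$. Because $(d_e,e),(e,v)\in R$ while $(d_e,v)\notin R$, $R$ is non-transitive, so (by the cited characterization) the revelation principle fails on this instance. Costs are $c_e(0)=M$, $c_e(1)=0$; $c_{d_e}(0)=0$, $c_{d_e}(1)=M'$; and $c_v(0)=0$, $c_v(1)=1$, with $M\gg M'\gg|V|$.

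With binary outcomes and a common utility, a (possibly randomized) direct mechanism is specified by the acceptance probabilities $q_{i'}:=\Pr[M(i')=1]$. Each type $i$ best-responds by reporting some $i'\in\operatorname{argmax}_{(i,i')\in R}q_{i'}$, so its effective acceptance probability is $p_i=\max_{(i,i')\in R}q_{i'}$. In the construction, $R^{-1}(v)=\{v\}\cup\{e\in E:v\in e\}$, $R^{-1}(e)=\{e,d_e\}$, and $R^{-1}(d_e)=\{d_e\}$. The large penalties force $p_e=1$ and $p_{d_e}=0$ in any near-optimal mechanism; the latter implies $q_e=q_{d_e}=0$ (both are $\le p_{d_e}$ by the max formula), while the former requires $\max_{v\in e}q_v=1$ for every edge, so the set $S:=\{v\in V:q_v=1\}$ must cover every edge of $G$. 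After substituting $q_e=q_{d_e}=0$, the overall objective reduces to $M\sum_e\max_{v\in e}q_v-\sum_v q_v$, and minimizing $|S|$ over vertex covers of $G$ is exactly Minimum Vertex Cover.

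The argument handles both the deterministic and randomized cases uniformly. The deterministic case is the restriction $q_{i'}\in\{0,1\}$ and is immediate. For randomized mechanisms, the reduced objective $M\sum_e\max_{v\in e}q_v-\sum_v q_v$ is convex in $(q_v)\in[0,1]^V$ (a linear term plus a non-negative combination of pointwise maxima), so it attains its maximum over the box at a binary vertex, again reducing to vertex cover. The main obstacle is justifying cleanly that in any (approximately) optimal mechanism the large constants $M,M'$ truly force $p_e=1$ and $p_{d_e}=0$ rather than enabling some sub-optimal fractional trade-off; this follows from a direct comparison using the gap between $M'$ and $|V|$ (any deviation from these boundary values incurs at least $\min(M,M')$ extra cost, outweighing the at most $|V|$ that could be saved elsewhere), after which the remainder of the reduction is purely combinatorial.
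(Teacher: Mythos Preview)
Your reduction from Minimum Vertex Cover is different from the paper's (which reduces from $\mathsf{MinSAT}$ using variable/literal/clause types), and the construction itself is correct: at integral $q$, the penalties $M\gg M'\gg |V|$ force every edge to be covered by $S=\{v:q_v=1\}$ with $q_e=0$, and the optimal cost equals the minimum vertex cover size. The Vertex Cover version is arguably a bit cleaner than the paper's $\mathsf{MinSAT}$ gadget.

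There is, however, a genuine gap in your randomized analysis. Your justification that $p_{d_e}=0$ is forced --- ``any deviation from these boundary values incurs at least $\min(M,M')$ extra cost'' --- is false for fractional deviations: $p_{d_e}=\varepsilon$ costs only $M'\varepsilon$, not $M'$. Worse, $q_e=0$ is not weakly dominant in general: if both endpoints of $e$ have $q_v=0$, then $q_e=1$ (paying $M'$ through $d_e$) strictly beats $q_e=0$ (paying $M$ through $e$), since $M>M'$. So you cannot substitute $q_e=q_{d_e}=0$ \emph{before} establishing integrality, and the convexity argument you give for $(q_v)_v$ alone does not close the loop, because it presupposes that substitution.

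The clean fix is to argue integrality over all the acceptance probabilities at once. Setting $q_{d_e}=0$ is genuinely without loss (it appears only in $p_{d_e}=\max(q_e,q_{d_e})$, which we want small). After that, the full cost
\[
\sum_v q_v \;+\; M\sum_e\bigl(1-\max(q_e,\max_{v\in e}q_v)\bigr)\;+\; M'\sum_e q_e
\]
is \emph{concave} in $(q_v,q_e)\in[0,1]^{V\cup E}$ (linear terms plus $-M$ times a pointwise maximum), so its minimum over the box is attained at a $0/1$ vertex. At integral points your penalty comparison is valid (any uncovered edge or any $q_e=1$ costs at least $M'>|V|$), and the reduction to minimum vertex cover goes through exactly as you describe. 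Alternatively, a direct exchange argument works: if $q_e>\max_{v\in e}q_v$, raise one endpoint to $q_e$ and zero out $q_e$; this saves $M'q_e$ on $d_e$, costs at most $q_e$ on the vertex, and can only help other edges incident to that vertex.
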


We postpone the proof of Theorem~\ref{thm:np-hardness-without-revelation}, as well as all other proofs in this section, to Appendix~C.
In light of Theorem~\ref{thm:np-hardness-without-revelation}, in the rest of the paper, we focus on finding optimal truthful direct-revelation mechanisms.
That is, we consider only mechanisms $M$ where for any $(i_1, i_2) \in R$,
\[
    \mathbb{E}[u_{i_1}(M(i_1))] \ge \mathbb{E}[u_{i_1}(M(i_2))].
\]

\subsection{General vs. Structured Utility Functions}
Following the convention in the literature, we assume agents always break ties by reporting truthfully.
As a result, for a (possibly randomized) truthful mechanism $M$, the cost of the principal can be written as
\[
    c(M) = \sum_{i \in \Theta} \mathbb{E}[c_i(M(i))].
\]
Our first finding establishes a dichotomy between deterministic and randomized mechanisms when agents can have arbitrary utility functions.
On one hand, it is known that an optimal randomized mechanism can be found in polynomial time by formulating the problem as a linear program \citep{conitzer2002complexity}.
On the other hand, finding an optimal deterministic mechanism is $\mathsf{NP}$-hard even in an extremely simple setting as described below.

\begin{theorem}[$\mathsf{NP}$-hardness with General Utility Functions]
\label{thm:np-hardness-with-general-utility}
    When partial verification is allowed, even when the revelation principle holds, it is $\mathsf{NP}$-hard to find an optimal deterministic mechanism, even if there are only $3$ outcomes and the utility functions are single-peaked (see Appendix~B.1 for a definition).
\end{theorem}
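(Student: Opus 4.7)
The plan is to reduce from a classical NP-hard graph problem such as Independent Set. Given an instance $(G, k)$ with $G = (V, E)$, I would construct an instance of the optimal deterministic truthful mechanism problem using three ordered outcomes $o_1 < o_2 < o_3$, types with only single-peaked preferences over this ordering, and a transitive reporting relation $R$, so that the revelation principle holds and the hypotheses of the theorem are met.

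The construction would introduce a \emph{vertex type} $t_v$ for each $v \in V$, with a cost structure encoding ``selected'' (say $M(t_v) = o_1$) versus ``not selected'' ($M(t_v) = o_2$), making $o_3$ prohibitively expensive for vertex types so that effectively $M(t_v) \in \{o_1, o_2\}$. For each edge $e = \{u, v\} \in E$ I would attach a small gadget of auxiliary types whose single-peaked preferences span several of the four admissible orderings---in particular, I expect both middle-peaked orderings $o_2 \succ o_1 \succ o_3$ and $o_2 \succ o_3 \succ o_1$ to play an essential role---together with internal cross-reports between them. Each gadget type can report the two endpoint vertex types and the other gadget types inside the same edge, while vertex types only self-report; transitivity of $R$ is then easy to verify locally, since all non-trivial reports are confined to a single edge's gadget together with its two endpoint vertex types.

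For correctness, truthful deterministic mechanisms would correspond bijectively to vertex subsets $S \subseteq V$, and each edge gadget would be designed so that its contribution to the principal's cost is low exactly when $\{u, v\} \not\subseteq S$ and prohibitively large otherwise. Combined with a linear reward for $|S|$ coming from the vertex costs, the principal's minimum cost is then an affine function of $\alpha(G)$, giving the reduction: the answer to the Independent Set instance is ``yes'' iff the optimal mechanism cost falls below the threshold determined by $k$.

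The main obstacle is designing the edge gadget. With only three outcomes there are only four single-peaked preference orderings, and the na\"ive gadget with a single auxiliary type reporting both endpoints does not work: for every such type, the set of outcomes admissible under truthfulness is weakly larger when both endpoints are at the ``selected'' outcome than when only one is, so the gadget cannot be made strictly more expensive in the former case. Breaking this monotonicity will require combining several auxiliary types with complementary preferences and internal reports, so that the intersection of their individual truthfulness constraints tightens exactly on the ``both selected'' configuration while all other configurations admit cheap joint realizations. Identifying the right collection of gadget types, tuning the per-gadget costs, and verifying that the transitivity of $R$ remains intact once gadgets of distinct edges share a vertex type, will constitute the bulk of the technical work.
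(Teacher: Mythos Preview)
Your approach differs from the paper's: the paper reduces from $\mathsf{MinSAT}$ rather than Independent Set, and gives a fully explicit construction. Types are variables, literals, and clauses; positive and negative literals carry the two opposite monotone single-peaked orderings ($o^+ \succ o^0 \succ o^-$ and $o^- \succ o^0 \succ o^+$), clause types peak at the middle outcome $o^0$, and a clause type can report any literal it contains (plus that literal's variable, to keep $R$ transitive). Truthfulness then forces a clause to receive $o^0$ (cost $1$) precisely when some literal in it has been assigned $o^0$, so the principal's optimal cost equals a fixed term $nN_2$ plus the number of satisfied clauses.

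The gap in your proposal is exactly the one you flag: the edge gadget is never constructed, and the monotonicity obstacle you identify is the heart of the matter rather than a detail to be tuned. Incentive constraints $u_g(M(g)) \ge u_g(M(t))$ over several reportable targets $t$ combine as a maximum, so they naturally encode ``penalize if \emph{at least one} target is in a given state'' --- an OR over endpoints --- whereas Independent Set asks you to penalize an AND (both endpoints selected). The paper's $\mathsf{MinSAT}$ reduction sidesteps this entirely: a clause being satisfied \emph{is} an OR over its literals, so a single clause type with the middle-peaked preference does the job and no multi-type gadget is needed. Your plan to defeat monotonicity via several auxiliary types with internal cross-reports is plausible in spirit, but with only four strict single-peaked orderings on three outcomes and the requirement that $R$ remain transitive across edges sharing a vertex, it is not at all clear such a gadget exists; as written, the proposal is a sketch with its central construction missing. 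Switching the source problem to one whose infeasibility condition is disjunctive, as the paper does with $\mathsf{MinSAT}$, removes the obstacle outright.
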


Although Theorem~\ref{thm:np-hardness-with-general-utility} establishes hardness for finding optimal deterministic mechanisms in most nontrivial cases, it leaves the possibility of efficient algorithms when all types have the same utility function --- which, as discussed in the introduction, is the setting we focus on in this paper.

\subsection{Finding Optimal Deterministic Mechanisms}
In light of the previously mentioned hardness results, for the rest of this section, we focus on the setting where the revelation principle holds and all types have the same utility function.

We recall and simplify some notations before we state the main result of this section (Theorem~\ref{thm:additive-deterministic}).
Let $u: \mathcal{O} \to \mathbb{R}_+$ be the common utility function of all types.
Recall that $n = |\Theta|$ is the number of types and $m = |\mathcal{O}|$ is the number of outcomes.
Let $\Theta = [n] = \{1, \ldots, n\}$.
For brevity, we use $\mathcal{O} = \{o_1, \dots, o_m\} \subseteq \mathbb{R}_+$ to encode the utility function $u$.
That is, for all $j \in [m]$, $o_j \in R_+$ is the utility of the agent under the $j$-th outcome.
Without loss of generality, assume $o_1 = 0$, and $o_j < o_{j + 1}$ for all $j \in [m - 1]$.

We give an efficient algorithm (Algorithm~\ref{alg:additive-deterministic}) for finding an optimal deterministic mechanism when partial verification is allowed.\footnote{
In a more empirically focused companion paper \citep{krishnaswamy2021classification}, we apply a simplified version of Algorithm~\ref{alg:additive-deterministic} to a special case of the problem studied in this paper.
There, the goal is to find a nearly optimal {\em binary} classifier (i.e., $m = 2$), given only {\em sample access} to the population distribution over the type space.
}
Our algorithm first builds a (capacitated) directed graph based on the principal's cost function and the reporting structure, then finds an $s$-$t$ min-cut in the graph, and then constructs a mechanism based on the found min-cut.
The idea is finite-capacity cuts in the graph constructed correspond bijectively to truthful mechanisms, where the capacity is precisely the cost of the principal.
In particular, we use edges with $\infty$ capacity to ensure that if one type gets an outcome, any type that can misreport the former must get at least as good an outcome.
See Figure~\ref{fig:flow} for an illustration of Algorithm~\ref{alg:additive-deterministic}.
The following theorem establishes the correctness and time complexity of Algorithm~\ref{alg:additive-deterministic}.

\begin{figure*}[ht]
\centering
\includegraphics[width=0.48\linewidth]{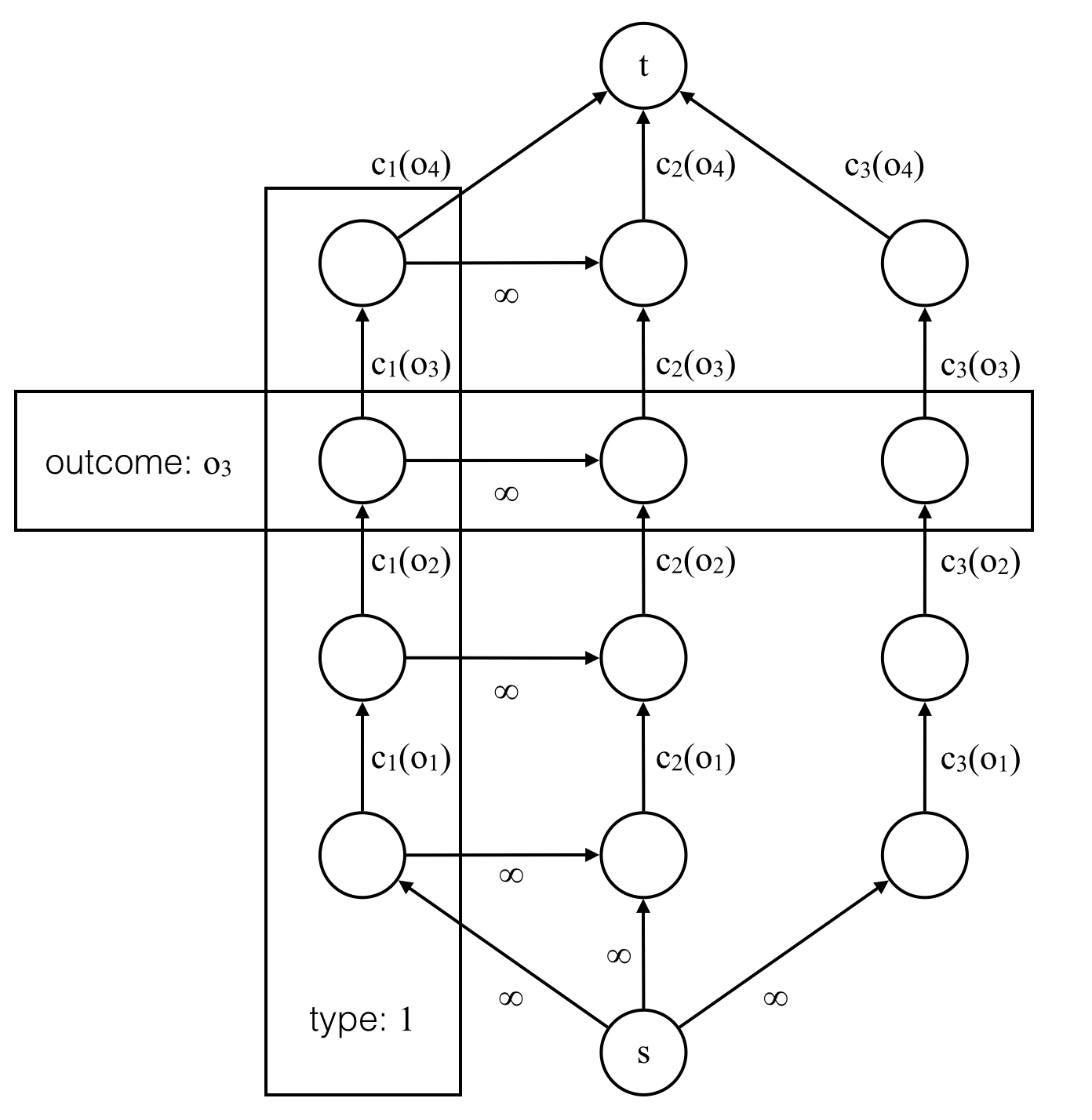}
\includegraphics[width=0.48\linewidth]{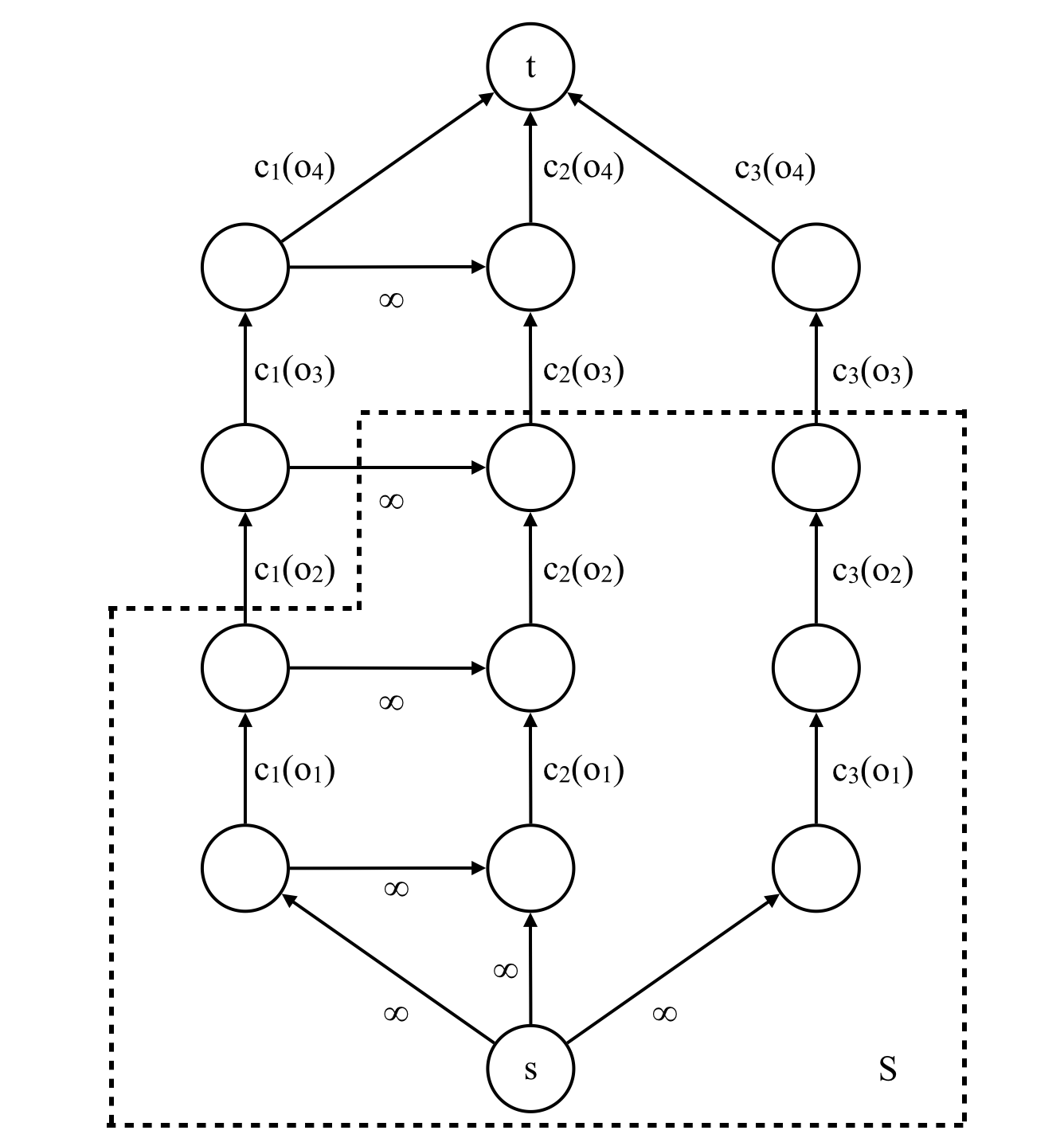}
\caption{
    An example of the graph constructed in Algorithm~\ref{alg:additive-deterministic}.
    As highlighted in the left graph, each row corresponds to an outcome and each column corresponds to a type.
    The horizontal edges with infinite capacity correspond to the fact that type $2$ can misreport as type $1$.
    The right graph gives a possible $s$-$t$ min-cut, which corresponds to a mechanism where $M(1) = o_2$, $M(2) = (o_3)$, and $M(3) = o_3$.
    The horizontal edges make sure that type $1$ never gets a more desirable outcome than type $2$, so type $2$ never misreports.
    The cost of the mechanism $M$ is equal to the value of the min-cut, which is $c_1(o_2) + c_2(o_3) + c_3(o_3)$.
}
\label{fig:flow}
\end{figure*}

\begin{algorithm*}[t]
\SetAlgoNoLine
\KwIn{The set of types $\Theta$, the principal's cost function $\{c_i\}_{i \in \Theta}$ for each type, the set of outcomes $\mathcal{O}$ (which encodes the agents' common utility function), and the reporting structure $R$.}
\KwOut{A deterministic truthful mechanism $M: \Theta \to \mathcal{O}$ minimizing the principal's cost.}

Let $V \leftarrow (\Theta \times \mathcal{O}) \cup \{s, t\}$, $E \leftarrow \emptyset$\;
Replace $R$ with its transitive closure (using the Floyd–Warshall algorithm): \\
\For{$i_2, i_1, i_3 \in \Theta$ where $(i_1, i_2) \in R$ and $(i_2, i_3) \in R$}{
    $R \leftarrow R \cup \{(i_1, i_3)\}$ \;
}
\For{each type $i \in \Theta$}{
    $E \leftarrow E \cup \{(s, (i, o_1), \infty)\}$ (add an edge from $s$ to $(i, o_1)$ with capacity $\infty$) \;
    \For{each outcome $j \in [m - 1]$}{
        $E \leftarrow E \cup \{((i, o_j), (i, o_{j + 1}), c_i(o_j))\}$ (add an edge from $(i, o_j)$ to $(i, o_{j + 1})$ with capacity $c_i(o_j)$)\;
    }
    $E \leftarrow E \cup \{((i, o_m), t, c_i(o_m))\}$ (add an edge from $(i, o_m)$ to $t$ with capacity $c_i(o_m)$)\;
}
\For{each pair of types $(i_1, i_2)$ where $i_1 \ne i_2$ and $(i_1, i_2) \in R$, and each outcome $o_j \in \mathcal{O}$}{
    $E \leftarrow E \cup \{((i_2, o_j), (i_1, o_j), \infty)\}$ (add an edge from $(i_2, o_j)$ to $(i_1, o_j)$ with capacity $\infty$)\;
}
Compute an $s$-$t$ min-cut $(S, \overline{S})$ on graph $G = (V, E)$ \;
\For{each type $i \in \Theta$}{
    Let $M(i) = o_j$ where $j = \max\{j' \in [m] \mid (i, o_{j'}) \in S\}$\;
}
\Return $M$\;
\caption{Finding an optimal deterministic mechanism.}
\label{alg:additive-deterministic}
\end{algorithm*}

\begin{theorem}[Fast Algorithm for Finding Optimal Deterministic Mechanisms]
\label{thm:additive-deterministic}
    Suppose for any $i \in [n]$ and $j \in [m]$, $c_i(o_j) \in \mathbb{N}$.
    Let $W = \max_{i,j} c_i(o_j)$.
    Algorithm~\ref{alg:additive-deterministic} outputs an optimal deterministic truthful mechanism in time $O(T_\mathsf{MinCut}(mn, mn^2, W))$, where $T_\mathsf{MinCut}(n', m', W')$ is the time it takes to find an $s$-$t$ min-cut in a graph with $n'$ vertices, $m'$ edges, and maximum capacity $W'$.
\end{theorem}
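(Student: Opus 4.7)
The plan is to prove Theorem~\ref{thm:additive-deterministic} by exhibiting a cost-preserving bijection between deterministic truthful mechanisms and ``prefix-form'' finite $s$-$t$ cuts in the graph $G$ built by Algorithm~\ref{alg:additive-deterministic}, and then showing that every min-cut can be taken to have prefix form. First, I would observe that the infinite edges $s \to (i, o_1)$ force $(i, o_1) \in S$ for every type $i$ in any finite cut, so for each $i$ the quantity $k_i := \max\{j : (i, o_j) \in S\}$ is well defined and at least $1$. I would then define a cut to be in \emph{prefix form} if $(i, o_j) \in S$ iff $j \le k_i$, and associate to such a cut the mechanism $M(i) = o_{k_i}$. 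A direct inspection of the column of $(i, \cdot)$-vertices shows that the only finite edge cut in this column is $(i, o_{k_i}) \to (i, o_{k_i+1})$ (or $(i, o_m) \to t$ when $k_i = m$), which contributes exactly $c_i(o_{k_i}) = c_i(M(i))$; summing over $i$ gives cut capacity equal to $c(M)$.

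Next I would handle the infinite horizontal edges. A prefix-form cut is finite iff no infinite edge $(i_2, o_j) \to (i_1, o_j)$ with $(i_1, i_2) \in R$ crosses from $S$ to $\bar S$, which happens iff $k_{i_1} \ge k_{i_2}$ whenever $(i_1, i_2) \in R$; by the encoding $\mathcal{O} \subseteq \mathbb{R}_+$ of utilities this is precisely the truthfulness condition $u(M(i_1)) \ge u(M(i_2))$. Since the algorithm first replaces $R$ with its transitive closure, every truthfulness inequality implied by $R$ is represented as an infinite edge, so prefix-form finite cuts biject with truthful deterministic mechanisms, with matching cost.

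The main technical step, and the one that requires the most care, is showing that we may restrict attention to prefix-form cuts at no loss. Given any finite cut $(S, \bar S)$, let $k_i := \max\{j : (i, o_j) \in S\}$ as above and define $S' := \{s\} \cup \{(i, o_j) : j \le k_i\}$. I would argue (i) that $(S', \bar{S'})$ cuts no infinite edge: the $s \to (i, o_1)$ edges are fine since $k_i \ge 1$; for a horizontal edge $(i_2, o_j) \to (i_1, o_j)$ with $(i_1, i_2) \in R$, finiteness of the original cut applied to $(i_2, o_{k_{i_2}}) \to (i_1, o_{k_{i_2}})$ forces $(i_1, o_{k_{i_2}}) \in S$, hence $k_{i_1} \ge k_{i_2}$, hence this edge does not cross $S'/\bar{S'}$ for any $j \le k_{i_2}$; and (ii) that the cost does not increase, because on each type's column the original cut must cross the column at least once at or above level $k_i$ and hence pays at least $c_i(o_{k_i})$, which is exactly what $(S', \bar{S'})$ pays on that column, while $S'$ contains no other cut edges in the column. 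Thus a min-cut of $G$ has value equal to the minimum principal cost achievable by any truthful deterministic mechanism, and the mechanism $M(i) = o_{j}$ with $j = \max\{j' : (i, o_{j'}) \in S\}$ returned by the algorithm attains it.

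For the running time, the transitive closure costs $O(n^3)$; the graph has $|V| = mn + 2 = O(mn)$ vertices and $|E| = O(n + n(m-1) + n + m \cdot n(n-1)) = O(mn^2)$ edges, with maximum finite capacity $W$ (infinite capacities may be replaced by any strict upper bound on the optimum, e.g. $nW + 1$, without affecting any min-cut, keeping the maximum capacity polynomial in $W, n$). The bottleneck is the $s$-$t$ min-cut computation, giving the claimed $O(T_{\mathsf{MinCut}}(mn, mn^2, W))$ bound. The only subtle part of the whole argument is the ``prefix closure preserves finiteness and does not increase cost'' step; the rest is routine bookkeeping.
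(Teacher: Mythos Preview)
Your proposal is correct and follows essentially the same approach as the paper: both establish a cost-preserving bijection between prefix-form (the paper calls them ``downward-closed'') finite $s$-$t$ cuts and truthful deterministic mechanisms, and both show that passing from an arbitrary finite cut to its prefix closure does not increase capacity.

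One small gap in your running-time argument: replacing the infinite capacities by $nW+1$ yields a graph with maximum capacity $\Theta(nW)$, so you only get $O(T_{\mathsf{MinCut}}(mn, mn^2, nW))$, not the claimed $O(T_{\mathsf{MinCut}}(mn, mn^2, W))$. The paper closes this gap by replacing the infinite capacities with $W+1$ and then arguing---crucially using that $R$ has been transitively closed---that no $(W{+}1)$-capacity horizontal edge can lie in a min-cut (if one did, absorbing all horizontal out-neighbors of its tail into $S$ would save $W+1$ while adding at most one vertical edge of capacity $\le W$, and transitivity guarantees these newly absorbed vertices introduce no new crossing horizontal edges). This is the actual role the transitive closure plays in the paper; for correctness with true infinite capacities it is irrelevant, as you implicitly note.
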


We note that Algorithm~\ref{alg:additive-deterministic} only finds an optimal deterministic mechanism {\em subject to truthfulness} --- when the revelation principle does not hold, Algorithm~\ref{alg:additive-deterministic} may not find an unconditionally optimal mechanism (and indeed finding that is $\mathsf{NP}$-hard given Theorem~\ref{thm:np-hardness-without-revelation}).
The same applies for all our algorithmic results.

\subsection{Optimality of Deterministic Mechanisms with Convex Costs}
In the previous subsection, we showed that when the revelation principle holds and all types have the same utility function, there is a min-cut-based algorithm (Algorithm~\ref{alg:additive-deterministic}) that finds an optimal deterministic truthful mechanism.

In this subsection, we identify an important special case where there exists an optimal truthful mechanism that is deterministic (even when randomized mechanisms are allowed).
Consequently, we have an algorithm (Algorithm~\ref{alg:additive-deterministic}) for finding the optimal truthful mechanism that runs faster than solving a linear program.
More importantly, as we will show in Section~\ref{sec:additive-reduction}, we can essentially reduce the general case to this special case, and consequently obtain an algorithm for computing the optimal truthful mechanism whose runtime is asymptotically the same as Algorithm~\ref{alg:additive-deterministic}.

We first show (in Example~\ref{ex:gap}) that, in general, there can be an arbitrarily large gap between the cost of the optimal deterministic mechanism and that of the optimal randomized mechanism, even when restricted to truthful mechanisms and when all types share the same utility function.

\begin{example}[Gap between Deterministic and Randomized Mechanisms]
\label{ex:gap}
    There are $2$ types $\Theta = \{1, 2\}$ and $3$ outcomes $\mathcal{O} = \{o_1 = 1, o_2 = 2, o_3 = 3\}$, which encode the common utility function.
    The principal's cost is given by $c_1(o_1) = c_1(o_3) = \infty$, $c_1(o_2) = 0$, $c_2(o_1) = c_2(o_3) = 0$, and $c_2(o_2) = \infty$.
    The reporting structure $R$ allows any type to report any other type, i.e.,
        $R = \{(1, 1), (2, 2), (1, 2), (2, 1)\}$.
    Consider first the optimal truthful randomized mechanism, which as we argue below has cost $0$.
    To make the principal's cost finite, the optimal truthful mechanism must assign outcome $o_2$ to type $1$ with probability $1$, which gives type $1$ utility $2$.
    To prevent misreporting, the mechanism must give type $2$ the same expected utility.
    And again, to make the cost finite, it must never assign outcome $o_2$ to type $2$.
    The unique way to satisfy the above is to assign to type $2$  outcome $o_1$ with probability $1/2$, and $o_3$ with probability $1/2$.
    
    Now consider any deterministic truthful mechanism.
    Any truthful mechanism must give both types the same utility to prevent misreporting.
    The only way to achieve this deterministically is to assign the same outcome to both types.
    However, all $3$ possibilities result in infinite total cost, so all deterministic truthful mechanisms have cost infinity.
\end{example}

Example~\ref{ex:gap} shows that Algorithm~\ref{alg:additive-deterministic} in general does not find an (approximately) optimal truthful mechanism when randomized mechanisms are allowed.
In such cases, one has to fall back to significantly slower algorithms, e.g., solving the straightforward LP formulation of the problem with $mn$ variables and $n^2$ constraints.
It is worth noting that the LP formulation does not utilize the fact that types share an identical utility function.
To address this issue, we identify an important special case where there does exist an optimal truthful mechanism that is deterministic: when the principal's cost is convex in the common utility function.
More importantly, as we will show in Section~\ref{sec:additive-reduction}, we can reduce the problem of finding the optimal randomized mechanism under general costs to the problem of finding the optimal mechanism with convex costs.
First we formally define the notion of convex costs we use.

\begin{definition}[Convex Costs]
\label{def:convex-costs}
    For any $i \in \Theta$, let the piecewise linear extension $c_i^\ell: [o_1, o_m] \to \mathbb{R}_+$ of $c_i$ be such that (1) for any $x \in \mathcal{O}$, $c_i^\ell(x) = c_i(x)$, and (2) for any $x \in [o_1, o_m] \setminus \mathcal{O}$,
    \[
        c_i^\ell(x) = \frac{o_{j + 1} - x}{o_{j + 1} - o_j} \cdot c_i(o_j) + \frac{x - o_j}{o_{j + 1} - o_j} c_i(o_{j + 1}),
    \]
    where $j = \max\{j' \in [m] \mid o_{j'} \le x\}$.
    The principal's cost function $\{c_i\}_{i \in \Theta}$ is convex if for every $i \in \Theta$, the piecewise linear extension $c_i^\ell$ of $c_i$ is convex.
\end{definition}

\begin{lemma}[Optimality of Deterministic Mechanisms with Convex Costs]
\label{lem:additive-convex}
    When all types share the same utility function, and the principal's cost function is convex, there is an optimal truthful mechanism that is deterministic even with partial verification allowed.
\end{lemma}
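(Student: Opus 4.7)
The plan is a two-stage reduction. Given an arbitrary randomized truthful mechanism $M$, I will first normalize it so that each type's outcome distribution is supported on (at most) two consecutive outcomes, then round this normalized mechanism to a deterministic one using a single coupling variable, so that truthfulness is preserved in \emph{every} realization while the principal's cost is preserved in expectation.

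For the first stage, write $U_i = \mathbb{E}[u(M(i))]$ for the expected utility given to type $i$. Because all types share the utility $u$, the truthfulness constraints of $M$ depend only on these numbers. I would pick $j_i \in [m-1]$ with $o_{j_i} \le U_i \le o_{j_i+1}$ (handling $U_i = o_m$ as the degenerate case) and replace $M(i)$ by the unique two-point distribution $M'(i)$ on $\{o_{j_i}, o_{j_i+1}\}$ with mean $U_i$. The values $U_i$ are unchanged, so truthfulness is preserved. Convexity of $c_i^\ell$ enters via Jensen's inequality:
\[
    \mathbb{E}[c_i(M(i))] = \mathbb{E}[c_i^\ell(M(i))] \ge c_i^\ell(U_i) = \mathbb{E}[c_i(M'(i))],
\]
since the two-point distribution on the segment endpoints realizes the linear interpolation of $c_i$ at $U_i$, which equals $c_i^\ell(U_i)$ by Definition~\ref{def:convex-costs}.

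For the second stage, I would write $U_i = \alpha_i o_{j_i} + (1-\alpha_i)o_{j_i+1}$ with $\alpha_i \in [0,1]$, draw a \emph{single} $Z \sim \mathrm{Unif}[0,1]$, and set $M_Z(i) = o_{j_i}$ if $Z \le \alpha_i$ and $M_Z(i) = o_{j_i+1}$ otherwise. For each $i$, the marginal of $M_Z(i)$ matches $M'(i)$, so averaging over $Z$ gives $\mathbb{E}_Z[c(M_Z)] = c(M') \le c(M)$, and hence some realization of $Z$ yields a deterministic mechanism whose cost is at most that of the optimal randomized mechanism.

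The step I expect to demand the most care is showing that every realization $M_Z$ is truthful. Fix $(i_1,i_2) \in R$; truthfulness of $M$ gives $U_{i_1} \ge U_{i_2}$ and hence $j_{i_1} \ge j_{i_2}$. If $j_{i_1} > j_{i_2}$, then $u(M_Z(i_1)) \ge o_{j_{i_1}} \ge o_{j_{i_2}+1} \ge u(M_Z(i_2))$ for every $Z$. If $j_{i_1} = j_{i_2}$, then on this common interval $U$ is a decreasing function of $\alpha$, so $U_{i_1} \ge U_{i_2}$ translates to $\alpha_{i_1} \le \alpha_{i_2}$, and the shared threshold forces: whenever $M_Z(i_2) = o_{j_{i_2}+1}$, i.e.\ $Z > \alpha_{i_2} \ge \alpha_{i_1}$, one also has $M_Z(i_1) = o_{j_{i_1}+1}$. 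Using a common $Z$ across all types is precisely what makes this pointwise coupling work; independent rounding per type would not in general preserve truthfulness, and identifying this correlated-rounding viewpoint is the crux of the argument.
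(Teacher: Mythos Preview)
Your proposal is correct and follows essentially the same two-stage approach as the paper: collapse each type's outcome distribution onto two consecutive outcomes, then use a single shared uniform variable to round to a deterministic mechanism that is truthful in every realization. The only minor difference is that your first stage invokes Jensen's inequality for $c_i^\ell$ directly, whereas the paper reaches the same conclusion by an iterative mass-shifting argument; both are valid, and the crucial correlated-rounding step in the second stage is identical (up to the cosmetic choice of whether $\alpha_i$ denotes the weight on the lower or the higher outcome).
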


\subsection{Reducing General Costs to Convex Costs}
\label{sec:additive-reduction}
Lemma~\ref{lem:additive-convex} together with Algorithm~\ref{alg:additive-deterministic} provides an efficient way for finding optimal truthful mechanisms with convex costs (even when randomized mechanisms are allowed).
One may still wonder if it is possible to design faster algorithms {\em in general} than solving the standard LP formulation, presumably by exploiting the additional structure that the agents share the same utility function.
To this end, we observe that for computing optimal mechanisms, only the convex envelope of the principal's cost function matters.
Given this observation, we show that finding optimal truthful mechanisms can be reduced very efficiently to finding optimal deterministic mechanisms.

\begin{algorithm}[t]
\SetAlgoNoLine
\KwIn{The set of types $\Theta$, the principal's cost function $\{c_i\}_{i \in \Theta}$ for each type, the set of outcomes $\mathcal{O}$ (which encodes the common utility function), and the reporting structure $R$.}
\KwOut{A truthful mechanism $M: \Theta \to \mathcal{O}$ minimizing the principal's cost.}

\For{each type $i$}{
    Compute the convex envelope $c_i^-: [o_1, o_m] \to \mathbb{R}_+$ of $c_i$, defined such that for any $x \in [o_1, o_m]$,
    \[
        c_i^-(x) = \min_{o \in \Delta(\mathcal{O}), \mathbb{E}[o] = x}\mathbb{E}[c_i(o)].
    \]
    Let $\widehat{c_i}$ be $c_i^-$ restricted to $\mathcal{O}$\;
}
Run Algorithm~\ref{alg:additive-deterministic} on input $(\Theta, \{\widehat{c_i}\}_{i \in \Theta}, \mathcal{O}, R)$.
Let $\widehat{M}$ be the resulting deterministic mechanism\;
\For{each type $i$}{
    $\displaystyle M(i) \leftarrow \operatorname{argmin}_{o \in \Delta(\mathcal{O}), \mathbb{E}[o] = \widehat{M}(i)} \mathbb{E}[c_i(o)]$ \;
}
\Return $M$\;
\caption{Finding an optimal (possibly randomized) truthful mechanism.}
\label{alg:additive-randomized}
\end{algorithm}

We present Algorithm~\ref{alg:additive-randomized}, which computes the optimal truthful mechanism and has the same asymptotic runtime as Algorithm~\ref{alg:additive-deterministic}.
Algorithm~\ref{alg:additive-randomized} first computes the convex envelope of the principal's cost function, and then finds an optimal ``deterministic'' mechanism by calling Algorithm~\ref{alg:additive-deterministic} with the same types and outcomes, but replacing the principal's cost function with its convex envelope.
Algorithm~\ref{alg:additive-randomized} then recovers an optimal randomized mechanism from the ``deterministic'' one, by interpreting each ``deterministic'' outcome as a convex combination of outcomes in an optimal way.
The following theorem establishes the correctness and time complexity of Algorithm~\ref{alg:additive-randomized}.

\begin{theorem}
\label{thm:additive-randomize} 
    Algorithm~\ref{alg:additive-randomized} finds an optimal (possibly randomized) truthful mechanism, in asymptotically the same time as Algorithm~\ref{alg:additive-deterministic}.
\end{theorem}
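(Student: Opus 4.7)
The plan is to verify correctness of Algorithm~\ref{alg:additive-randomized} and then handle its runtime.

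For correctness, I would first observe that the cost function $\widehat{c_i}$ obtained by restricting the convex envelope $c_i^-$ to $\mathcal{O}$ is convex in the sense of Definition~\ref{def:convex-costs}. Indeed, $c_i^-$ is piecewise linear on $[o_1,o_m]$ with breakpoints contained in $\mathcal{O}$ (the lower convex hull of a function defined on a finite set can only bend at its sample points), so the piecewise linear extension $\widehat{c_i}^\ell$ of $\widehat{c_i}$ agrees with $c_i^-$ everywhere on $[o_1, o_m]$, and is therefore convex. Applying Lemma~\ref{lem:additive-convex} to the instance $(\Theta, \{\widehat{c_i}\}, \mathcal{O}, R)$, there exists an optimal truthful mechanism for costs $\widehat{c_i}$ that is deterministic, so by Theorem~\ref{thm:additive-deterministic} the call to Algorithm~\ref{alg:additive-deterministic} returns a $\widehat{M}$ that is optimal among all randomized truthful mechanisms for $\widehat{c_i}$.

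Next I would show that the optimal randomized truthful cost under the original $c_i$ coincides with this value and is attained by the constructed $M$. Because all types share a common utility $u$, a mechanism's truthfulness depends only on the expected-utility vector $(x_i) = (\mathbb{E}[u(M(i))])$, and requires $x_{i_1} \geq x_{i_2}$ whenever $(i_1, i_2) \in R$. By the very definition of $c_i^-$, for any monotone $(x_i)$ the minimum cost achievable on type $i$ by any distribution with mean $x_i$ equals $c_i^-(x_i)$; since $\widehat{c_i}^\ell = c_i^-$, the same envelope governs the $\widehat{c_i}$ instance. Thus both optimal randomized costs equal $\min \sum_i c_i^-(x_i)$ over monotone $(x_i) \in [o_1,o_m]^n$, and this common minimum is attained at $(\widehat{M}(i))_i \in \mathcal{O}^n$. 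The mechanism $M$ produced by the last loop, by construction, induces the same expected-utility vector as $\widehat{M}$ (hence is truthful) and achieves per-type cost exactly $c_i^-(\widehat{M}(i))$, matching the lower bound.

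For the runtime, computing the lower convex hull of the already-sorted points $\{(o_j, c_i(o_j))\}_j$ and evaluating $c_i^-$ at each $o_j$ takes $O(m)$ per type, and the reconstruction step (a distribution supported on at most two adjacent breakpoints of $c_i^-$) takes $O(1)$ per type; the total overhead outside the call to Algorithm~\ref{alg:additive-deterministic} is $O(nm)$, which is absorbed by the $\Theta(mn)$ vertices and $\Theta(mn^2)$ edges processed inside Algorithm~\ref{alg:additive-deterministic}. I expect the main obstacle to be the step linking $\widehat{c_i}$ and $c_i^-$---specifically, the claim that linear interpolation of $c_i^-$'s values at the sample points $\mathcal{O}$ recovers $c_i^-$ itself, which is what simultaneously legitimizes invoking Lemma~\ref{lem:additive-convex} on $\widehat{c_i}$ and equates the two optimal costs. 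Once this equivalence is in hand, the rest is bookkeeping.
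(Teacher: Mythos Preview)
Your proposal is correct and follows essentially the same approach as the paper: both reduce to the convex program $\min_{(x_i)\text{ monotone}}\sum_i c_i^-(x_i)$, observe that this program is identical for $\{c_i\}$ and $\{\widehat{c_i}\}$ (precisely because $\widehat{c_i}^\ell=c_i^-$), invoke Lemma~\ref{lem:additive-convex} on the convex instance to justify calling Algorithm~\ref{alg:additive-deterministic}, and then reconstruct $M$ from $\widehat{M}$ via the envelope. The only minor discrepancy is that the paper budgets $O(m)$ per type for the reconstruction step (scanning to find the bracketing breakpoints) rather than your $O(1)$, but either bound is dominated by the min-cut call.
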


Below we give a comparison between the time complexity of our algorithm, Algorithm~\ref{alg:additive-randomized}, and that of the LP-based approach.\footnote{We note that a conclusive comparison is unrealistic since algorithms for both LP and min-cut keep being improved.}
The current best algorithm for LP \citep{cohen2019solving} takes time that translates to $\widetilde{O}(n^{2.37}m^{2.37} + n^{4.74})$\footnote{$\widetilde{O}$ hides a poly-logarithmic factor.} in our setting (this is, for example, at least $\widetilde{O}(n^{3.24} m^{1.5})$).
The current best algorithm for $s$-$t$ min-cut \citep{lee2014path} takes time that translates to $\widetilde{O}(n^{2.5} m^{1.5})$ in our setting.
Moreover, in a typical classification setting, it is the number of outcomes (corresponding to ``accept'', etc.) $m$ that is small, and the number of types (e.g., ``(CS major, highly competitive, female, international, \dots)'', ``(math major, acceptable, male, domestic, \dots)'') $n$ is much larger.
In such cases, the improvement becomes even more significant.
Our results are theoretical, but practically, while there are highly optimized packages for LP, there are also highly optimized packages for max-flow / min-cut that are still much faster.
Last but not least, in many practical settings, the principal has to implement a deterministic policy (it is hard to imagine college admissions explicitly made random), in which case our Algorithm~\ref{alg:additive-deterministic} can be applied while LP generally does not give a solution.

\section{Generalizing to Combinatorial Costs}
\label{sec:submodular}

In this section, we generalize the problem considered in the previous section, allowing the principal to have a combinatorial cost function over outcomes for each type.
See Appendix~A for a more detailed exposition.

\paragraph{The combinatorial setting.}
As before, let $\Theta = [n]$ be the set of types, $\mathcal{O} = \{o_j\}_{j \in [m]} \subseteq \mathbb{R}_+$ be the set of outcomes encoding the common utility function, and $R \subseteq \Theta \times \Theta$ be the reporting structure.
The principal's cost function $c: \mathcal{O}^\Theta \to \mathbb{R}_+$ now maps a vector $O = (O^i)_i$ of outcomes for all types to the principal's cost $c(O)$.
This subsumes the additive case, since one can set the cost function $c$ to be
\[
    c((O^i)_i) = \sum_{i \in \Theta} c_i(O^i).
\]
Because the cost function is now combinatorial, it matters how the mechanism combines outcomes for different types.
We therefore modify the definition of a randomized mechanism $M \in \Delta(\Theta \to \mathcal{O}) = \Delta(\mathcal{O}^\Theta)$, so that it allows correlation across different types.
The principal's cost from using a truthful mechanism $M$ is then
    $c(M) = \mathbb{E}[c((M(i))_i)]$.
For type $i$, the utility from executing mechanism $M$ is still
    $u_i(M) = \mathbb{E}[M(i)]$.
$M$ is truthful iff for any $(i_1, i_2) \in R$,
    $u_{i_1}(M) \ge u_{i_2}(M)$.
In the rest of the section, we present combinatorial generalizations of all our algorithmic and structural results given in the previous section.

\paragraph{General vs.\ submodular cost functions.}
Combinatorial functions in general are notoriously hard to optimize, even ignoring incentive issues.
To see the difficulty, observe that a combinatorial cost function $c: \mathcal{O}^\Theta \to \mathbb{R}_+$ over $\mathcal{O}^\Theta$ generally does not even admit a succinct representation (e.g., one whose size is polynomial in $m$ and $n$).
It is therefore infeasible to take the entire cost function as input to an algorithm.
To address this issue, the standard assumption in combinatorial optimization is that algorithms can access the combinatorial function through {\em value queries}.
That is, we are given an oracle that can evaluate the combinatorial function $c$ at any point $O \in \mathcal{O}^\Theta$, obtaining the value $c(O)$ in constant time.
For the rest of the paper, we assume that our algorithm can  access the cost function only through value queries.

Still, in order to minimize an {\em arbitrary} combinatorial function, in general one needs $\Omega(m^n)$ queries to obtain any nontrivial approximation.
Despite that, there exist efficient algorithms for combinatorial minimization for an important subclass of cost functions, namely submodular functions.

\begin{definition}[Submodular Functions]
    For any $O_1 = (O_1^i)_i \in \mathcal{O}^\Theta$ and $O_2 = (O_2^i)_i \in \mathcal{O}^\Theta$, let
    \[
        O_1 \wedge O_2 = (\min(O_1^i, O_2^i))_i \ \text{and} \ O_1 \vee O_2 = (\max(O_1^i, O_2^i))_i.
    \]
    A combinatorial cost function $c: \mathcal{O}^\Theta \to \mathbb{R}_+$ is submodular if for any $O_1, O_2 \in \mathcal{O}^\Theta$,
    \[
        c(O_1) + c(O_2) \ge c(O_1 \wedge O_2) + c(O_1 \vee O_2).
    \]
\end{definition}

In the rest of this section, we focus on submodular cost functions.
For this important special case, we give efficient algorithms for finding optimal truthful deterministic / randomized mechanisms, as well as a sufficient condition for the existence of an optimal mechanism that is deterministic.

\paragraph{Finding optimal deterministic mechanisms.}
First we present a polynomial-time combinatorial algorithm for finding optimal truthful deterministic mechanisms with partial verification, when the cost function is submodular.

\begin{theorem}
    There exists a polynomial-time algorithm which accesses the cost function via value queries only, and computes an optimal deterministic truthful mechanism when partial verification is allowed and the cost function is submodular.
\end{theorem}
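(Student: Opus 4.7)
The plan is to recognize the family of truthful deterministic mechanisms as a distributive lattice and reduce the problem to submodular function minimization over that lattice, which is known to be polynomial-time solvable in the value-query model.

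Since all types share the common utility function encoded by $\mathcal{O} = \{o_1 < \cdots < o_m\}$, a mechanism $M \in \mathcal{O}^\Theta$ is truthful iff $M(i_1) \ge M(i_2)$ whenever $(i_1, i_2) \in R$. The first step is to observe that the set $\mathcal{F}$ of truthful mechanisms is closed under the coordinatewise operations $\wedge$ and $\vee$ defined earlier in the paper: both $\min$ and $\max$ are monotone, so each inequality $M(i_1) \ge M(i_2)$ is preserved. Hence $\mathcal{F}$ is a sublattice of the product lattice $\mathcal{O}^\Theta$, and distributive since $\mathcal{O}^\Theta$ is.

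Next I would give $\mathcal{F}$ a compact combinatorial representation. Let $E = \Theta \times [m-1]$ and map each $M \in \mathcal{O}^\Theta$ to $S(M) = \{(i,k) : M(i) \ge o_{k+1}\}$. Truthful mechanisms correspond bijectively to subsets $S \subseteq E$ satisfying (a) $(i,k{+}1) \in S \Rightarrow (i,k) \in S$ for every $i$ and every $k$, and (b) $(i_2,k) \in S \Rightarrow (i_1,k) \in S$ whenever $(i_1,i_2) \in R$. These are exactly the order ideals of a partial order $\preceq$ on $E$ generated by $(i,k) \preceq (i,k{+}1)$ and $(i_1,k) \preceq (i_2,k)$, whose transitive closure is computable in time $\mathrm{poly}(n,m)$. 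Define $f: 2^E \to \mathbb{R}_+$ on ideals by $f(S) = c(M_S)$; each value query to $f$ costs exactly one value query to $c$. The identities $S(M_1 \wedge M_2) = S(M_1) \cap S(M_2)$ and $S(M_1 \vee M_2) = S(M_1) \cup S(M_2)$ make the bijection a lattice isomorphism, so submodularity of $c$ on $\mathcal{O}^\Theta$ transfers directly to submodularity of $f$ on the ring family of ideals.

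Finally I would invoke a standard polynomial-time algorithm for minimizing a submodular function over a ring family / distributive lattice given by its poset representation (e.g., the combinatorial algorithms of Schrijver or Iwata--Fleischer--Fujishige, or the ellipsoid-based algorithm of Gr\"otschel--Lov\'asz--Schrijver applied to the Lov\'asz extension over the ideal polytope). These run in time $\mathrm{poly}(|E|) = \mathrm{poly}(n,m)$ and use only value queries to $f$. The resulting minimizer $S^*$ reconstructs the optimal mechanism by setting $M_{S^*}(i) = o_{\ell_i}$ with $\ell_i = 1 + |\{k : (i,k) \in S^*\}|$. The main subtlety to verify carefully is the transfer of submodularity from $\mathcal{O}^\Theta$ to the ring family of ideals; this is immediate from the intersection/union identities above, but is the conceptual pivot of the proof. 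Once it is established, the theorem reduces to a black-box invocation of the cited submodular-minimization machinery.
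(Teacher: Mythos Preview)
Your proposal is correct and follows essentially the same route as the paper: the paper proves (Lemma~\ref{lem:distributive-lattice}) that the set of truthful deterministic mechanisms is a distributive lattice and then invokes Schrijver's reduction to submodular function minimization as a black box. You make the Birkhoff representation explicit by writing the lattice as the order ideals of the poset on $\Theta \times [m-1]$, which is exactly what Schrijver's reduction unpacks to; so your argument is a more spelled-out version of the same proof rather than a different approach.
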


\paragraph{Sufficient condition for the optimality of deterministic mechanisms.}
Restricted to additive cost functions, Lemma~\ref{lem:additive-convex} gives a sufficient condition under which there exists an optimal mechanism that is deterministic.
We present below a combinatorial version of this structural result when the outcome space is binary, i.e., when $m = 2$.

\begin{theorem}[Optimality of Deterministic Mechanisms with Binary Outcomes]
    When the outcome space is binary, i.e., $|\mathcal{O}| = 2$, and the principal's cost function is submodular, there is an optimal truthful mechanism that is deterministic, even when partial verification is allowed.
\end{theorem}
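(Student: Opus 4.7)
The plan is to use threshold rounding based on the marginal probabilities of the randomized mechanism, and to invoke the Lovász extension of the submodular cost function to control the cost of the rounded mechanism. Let $\mathcal{O} = \{o_1, o_2\}$ with $o_1 < o_2$, and identify any $O \in \mathcal{O}^\Theta$ with the set $S(O) = \{i \in \Theta : O^i = o_2\} \subseteq \Theta$. The principal's cost can then be viewed as a set function $c : 2^\Theta \to \mathbb{R}_+$, and the submodularity of $c$ over $\mathcal{O}^\Theta$ translates exactly to the standard set-submodularity $c(A) + c(B) \ge c(A \cup B) + c(A \cap B)$, since $O_1 \vee O_2$ corresponds to $S(O_1) \cup S(O_2)$ and $O_1 \wedge O_2$ to $S(O_1) \cap S(O_2)$.

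Now fix any truthful (possibly randomized and possibly correlated) mechanism $M$, and let $p_i = \Pr_{M}[M(i) = o_2]$ be its marginals. The expected utility of type $i$ is $u_i(M) = (1-p_i)o_1 + p_i o_2$, which is strictly increasing in $p_i$, so the truthfulness condition $u_{i_1}(M) \ge u_{i_2}(M)$ for $(i_1, i_2) \in R$ is equivalent to $p_{i_1} \ge p_{i_2}$. For each threshold $\tau \in [0, 1]$ define the deterministic mechanism
\[
    S_\tau = \{ i \in \Theta : p_i \ge \tau \}.
\]
Whenever $(i_1, i_2) \in R$ and $i_2 \in S_\tau$, we have $p_{i_1} \ge p_{i_2} \ge \tau$, so $i_1 \in S_\tau$; that is, $S_\tau$ assigns $o_2$ to $i_1$ whenever it assigns $o_2$ to $i_2$, which means $S_\tau$ is a deterministic truthful mechanism.

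It remains to show that some $S_{\tau^\star}$ has cost no larger than $c(M)$. I would invoke the standard fact that for a submodular set function $c$, the Lovász extension
\[
    \hat c(p) = \mathbb{E}_{\tau \sim U[0,1]}\!\bigl[c(S_\tau)\bigr]
\]
coincides with the convex closure of $c$; in particular, for every joint distribution $\mathcal{D}$ over subsets of $\Theta$ whose marginals equal $p$, one has $\mathbb{E}_{S \sim \mathcal{D}}[c(S)] \ge \hat c(p)$. Applying this to the distribution induced by $M$ (whose marginals are $p$ by construction) yields
\[
    c(M) = \mathbb{E}_M\!\bigl[c(S(M))\bigr] \;\ge\; \hat c(p) \;=\; \mathbb{E}_{\tau}[c(S_\tau)],
\]
so there exists $\tau^\star \in [0, 1]$ with $c(S_{\tau^\star}) \le c(M)$. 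Taking $M$ to be optimal, the deterministic truthful mechanism $S_{\tau^\star}$ achieves optimal cost, proving the theorem.

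The main obstacle is the Lovász-extension inequality $\mathbb{E}_\mathcal{D}[c(S)] \ge \hat c(p)$ for arbitrary (possibly correlated) distributions $\mathcal{D}$ with marginals $p$. This is classical but needs care to cite or re-derive: the cleanest route is to express $c(S) = \int_0^\infty \mathbf{1}[c(S) > t]\,dt$-style identities and appeal to the fact that on the chain of threshold sets the submodular inequality becomes an equality, so the threshold rounding realizes the minimum of $\mathbb{E}[c(S)]$ over all couplings with marginals $p$. Everything else, namely identifying truthful deterministic mechanisms with downward-closed sets under $R$ and verifying that threshold sets are closed under $R$, is elementary.
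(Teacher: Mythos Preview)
Your proposal is correct and is essentially the high-level version of the paper's argument. Both proofs use the same threshold rounding $S_\tau = \{i : p_i \ge \tau\}$ and verify truthfulness of $S_\tau$ in the same elementary way. The only difference is in how the cost inequality $c(M) \ge \mathbb{E}_\tau[c(S_\tau)]$ is obtained: you invoke directly the classical fact that for submodular $c$ the Lov\'asz extension coincides with the convex closure, whereas the paper re-derives this by an uncrossing argument (pick an optimal $M$ maximizing $\mathbb{E}[|M|^2]$, show its support must be a chain, and then observe that the threshold coupling reproduces $M$ exactly). The paper in fact explicitly notes your route as an alternative interpretation in its remarks following the proof, citing \cite{grotschel1984geometric}. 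Your version is cleaner if one is willing to cite the Lov\'asz-extension result; the paper's version is self-contained and also yields the structural statement that some optimal randomized mechanism already has chain support.
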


\paragraph{Computing optimal randomized mechanisms.}
Finally we give an algorithm for finding an optimal mechanism with arbitrary submodular cost functions.

\begin{theorem}
    When the cost function $c$ is submodular and bounded, for any desired additive error $\varepsilon > 0$, there is an algorithm which finds an $\varepsilon$-approximately optimal (possibly randomized) truthful mechanism\footnote{An $\varepsilon$-approximately optimal truthful mechanism is a truthful mechanism whose expected cost is at most $\varepsilon$ larger than the minimum possible cost of any truthful mechanism.} in time $\mathrm{poly}(n, m, \log(1 / \varepsilon))$, even if partial verification is allowed.
\end{theorem}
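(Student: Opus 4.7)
The plan is to reduce the problem to a convex program over an $n$-dimensional polytope parametrized by the expected utility of each type, and to solve that program via the ellipsoid method using a generalized Lov\'asz-type extension of $c$ as the evaluation oracle.

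First, I would note that a randomized mechanism $M \in \Delta(\mathcal{O}^\Theta)$ affects the problem in two nearly decoupled ways: truthfulness depends on $M$ only through the marginal expected utilities $x_i = \mathbb{E}[M(i)]$, while the cost is $\mathbb{E}[c(M)]$. For fixed $x \in [o_1, o_m]^n$, the minimum cost achievable by mechanisms with those marginals is exactly the convex envelope
\[
\bar c(x) \;=\; \min\bigl\{\, \mathbb{E}_{\pi}[c(O)] \;:\; \pi \in \Delta(\mathcal{O}^\Theta),\ \mathbb{E}_\pi[O^i]=x_i\ \forall i \,\bigr\},
\]
which is convex in $x$. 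The problem thus reduces to the convex program $\min_x \bar c(x)$ subject to $x_{i_1} \geq x_{i_2}$ for all $(i_1,i_2) \in R$ and $x \in [o_1, o_m]^n$.

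The main work is to evaluate $\bar c$ and obtain a subgradient. I would identify $\mathcal{O}^\Theta$ with the distributive lattice of downsets of the poset $P = [m-1] \times \Theta$, where $(k_1,i) \leq (k_2,i)$ iff $k_1 \leq k_2$, via $(O^i)_i \mapsto \{(k,i): o_k < O^i\}$. Each $x \in [o_1,o_m]^n$ admits a canonical staircase lift $y \in [0,1]^P$ given by $y_{k,i} = \min\bigl(1,\max\bigl(0,(x_i-o_k)/(o_{k+1}-o_k)\bigr)\bigr)$, which is monotone in $P$ so that its upper level sets are downsets (equivalently, elements of $\mathcal{O}^\Theta$). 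Sorting the $|P|=n(m-1)$ coordinates of $y$ in decreasing order, the standard chain-decomposition procedure writes $y = \sum_k \lambda_k \mathbf{1}_{U_k}$, and hence $x = \sum_k \lambda_k O^{(k)}$, where $O^{(0)} \leq O^{(1)} \leq \cdots \leq O^{(K)}$ is a chain in $\mathcal{O}^\Theta$ with $K \leq nm$. Setting $\bar c(x) = \sum_k \lambda_k c(O^{(k)})$ and verifying via submodularity that this matches the envelope above gives both the value and a subgradient using only $O(nm)$ value queries per call.

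With this oracle and the obvious separation oracle for the polytope, the ellipsoid method finds an $\varepsilon$-approximate minimizer $x^*$ in $\mathrm{poly}(n,m,\log(W/\varepsilon))$ time, where $W$ bounds $c$. The output mechanism is then the chain distribution from the staircase decomposition of $x^*$: sample $k$ with probability $\lambda_k$ and return $O^{(k)}$. By construction $\mathbb{E}[M(i)] = x_i^*$, so truthfulness transfers from $x^*$ to $M$, and $\mathbb{E}[c(M)] = \bar c(x^*)$. The main obstacle is establishing that the staircase-plus-sort construction actually computes the convex envelope of $c$ on $\mathcal{O}^\Theta$, not merely some convex extension. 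This is the distributive-lattice generalization of the textbook fact for binary Lov\'asz extensions; proving it requires a careful pairwise exchange argument using $c(O_1) + c(O_2) \geq c(O_1 \wedge O_2) + c(O_1 \vee O_2)$ on the terms of any alternative decomposition of $x$, plus some case analysis when the sorted staircase coordinates are tied (which corresponds to the uniqueness of the decomposition). A secondary concern is converting the approximate ellipsoid output into a genuine feasible mechanism; this follows from $O(W)$-Lipschitzness of $\bar c$, which absorbs any small perturbation into the final additive error.
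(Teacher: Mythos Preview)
Your reduction to a convex program is natural, but the parametrization by expected utilities $x\in[o_1,o_m]^n$ alone is too coarse, and the staircase-plus-sort construction does \emph{not} compute the convex envelope $\bar c(x)$. The ``main obstacle'' you flag is not merely delicate; the claim is false.

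Concretely, the staircase lift forces the marginal of each type $i$ to be supported on at most two \emph{consecutive} outcomes. But the cost-minimizing distribution realizing a given utility vector may need to randomize over non-consecutive outcomes. Take $n=1$, $m=3$, $o_1=0,o_2=1,o_3=2$, and $c(o_1)=c(o_3)=0$, $c(o_2)=W$; on a chain every function is trivially submodular. Then $\bar c(1)=0$ via the $50$--$50$ mixture of $o_1$ and $o_3$, but the staircase at $x=1$ is $y=(1,0)$, whose chain decomposition is the single deterministic outcome $o_2$, with cost $W$. In fact the resulting objective in $x$ is $W\cdot\min(x,2-x)$, a tent function---not even convex. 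The paper's own Example~\ref{ex:gap} (two types, three outcomes, additive hence submodular cost) exhibits the same failure with $n>1$: the optimal truthful mechanism assigns type~$2$ a $50$--$50$ mixture of $o_1$ and $o_3$, which the staircase cannot express, so your program never reaches cost~$0$.

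The paper avoids this by parametrizing the convex program by the \emph{full} marginal probabilities $\{p_{i,j}\}_{i\in\Theta,\,j\in[m]}$ rather than by $x$. For fixed marginals, its Algorithm~\ref{alg:envelope} greedily peels off the topmost outcome vector, producing the unique chain-supported distribution with those marginals (Lemma~\ref{lem:envelope}); the uncrossing argument you sketch then shows this chain distribution minimizes expected cost among \emph{all} joint distributions with those marginals. That minimum is convex (and piecewise linear) in $\{p_{i,j}\}$, the truthfulness constraints $\sum_j p_{i_1,j}o_j\ge\sum_j p_{i_2,j}o_j$ are linear, and the ellipsoid method applies. Your chain decomposition and exchange argument are the right ingredients; the fix is to drop the staircase and work directly in the $mn$-dimensional marginal space.
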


\clearpage

\section*{Ethics Statement}
Our results can be used to encourage truthful reporting, and thereby improve the efficiency of mechanisms for, e.g., allocating public resources.
In particular, this helps the principal to allocate resources to agents in actual need.
By presenting more efficient algorithms, we make large-scale applications of automated mechanism design possible, which can be applied to problems related to social good that were previously beyond reach.
Of course, our algorithms, as well as any other algorithm, could potentially cause harm if implemented in an irresponsible way (e.g., by using a cost function that discriminates between applicants in an unfair way).

\section*{Acknowledgements}

Part of this work was done while Yu Cheng was visiting the Institute of Advanced Study.
Hanrui Zhang and Vincent Conitzer are thankful for support from NSF under award IIS-1814056.

{
\bibliographystyle{plainnat}
\bibliography{ref}

\begin{thebibliography}{33}
\providecommand{\natexlab}[1]{#1}
\providecommand{\url}[1]{\texttt{#1}}
\expandafter\ifx\csname urlstyle\endcsname\relax
  \providecommand{\doi}[1]{doi: #1}\else
  \providecommand{\doi}{doi: \begingroup \urlstyle{rm}\Url}\fi

\bibitem[Andrew(1979)]{andrew1979another}
AM~Andrew.
\newblock Another efficient algorithm for convex hulls in two dimensions.
\newblock \emph{Information Processing Letters}, 9\penalty0 (5):\penalty0
  216--219, 1979.

\bibitem[Auletta et~al.(2011)Auletta, Penna, Persiano, and
  Ventre]{auletta2011alternatives}
Vincenzo Auletta, Paolo Penna, Giuseppe Persiano, and Carmine Ventre.
\newblock Alternatives to truthfulness are hard to recognize.
\newblock \emph{Autonomous Agents and Multi-Agent Systems}, 22\penalty0
  (1):\penalty0 200--216, 2011.

\bibitem[Balcan et~al.(2018)Balcan, Sandholm, and Vitercik]{balcan2018general}
Maria-Florina Balcan, Tuomas Sandholm, and Ellen Vitercik.
\newblock A general theory of sample complexity for multi-item profit
  maximization.
\newblock In \emph{Proceedings of the 2018 ACM Conference on Economics and
  Computation}, pages 173--174, 2018.

\bibitem[Balcan et~al.(2019)Balcan, Sandholm, and
  Vitercik]{balcan2019estimating}
Maria-Florina Balcan, Tuomas Sandholm, and Ellen Vitercik.
\newblock Estimating approximate incentive compatibility.
\newblock In \emph{Proceedings of the 2019 ACM Conference on Economics and
  Computation}, pages 867--867, 2019.

\bibitem[Bubeck(2015)]{bubeck2015convex}
S{\'e}bastien Bubeck.
\newblock Convex optimization: Algorithms and complexity.
\newblock \emph{Foundations and Trends{\textregistered} in Machine Learning},
  8\penalty0 (3-4):\penalty0 231--357, 2015.

\bibitem[Chen et~al.(2018)Chen, Podimata, Procaccia, and
  Shah]{chen2018strategyproof}
Yiling Chen, Chara Podimata, Ariel~D Procaccia, and Nisarg Shah.
\newblock Strategyproof linear regression in high dimensions.
\newblock In \emph{Proceedings of the 2018 ACM Conference on Economics and
  Computation}, pages 9--26, 2018.

\bibitem[Cohen et~al.(2019)Cohen, Lee, and Song]{cohen2019solving}
Michael~B Cohen, Yin~Tat Lee, and Zhao Song.
\newblock Solving linear programs in the current matrix multiplication time.
\newblock In \emph{Proceedings of the 51st annual ACM SIGACT symposium on
  theory of computing}, pages 938--942, 2019.

\bibitem[Cole and Roughgarden(2014)]{cole2014sample}
Richard Cole and Tim Roughgarden.
\newblock The sample complexity of revenue maximization.
\newblock In \emph{Proceedings of the forty-sixth annual ACM symposium on
  Theory of computing}, pages 243--252, 2014.

\bibitem[Conitzer and Sandholm(2002)]{conitzer2002complexity}
Vincent Conitzer and Tuomas Sandholm.
\newblock Complexity of mechanism design.
\newblock In \emph{Proceedings of the Eighteenth conference on Uncertainty in
  artificial intelligence}, pages 103--110, 2002.

\bibitem[Conitzer and Sandholm(2004)]{conitzer2004self}
Vincent Conitzer and Tuomas Sandholm.
\newblock Self-interested automated mechanism design and implications for
  optimal combinatorial auctions.
\newblock In \emph{Proceedings of the 5th ACM conference on Electronic
  commerce}, pages 132--141, 2004.

\bibitem[Dekel et~al.(2010)Dekel, Fischer, and Procaccia]{dekel2010incentive}
Ofer Dekel, Felix Fischer, and Ariel~D Procaccia.
\newblock Incentive compatible regression learning.
\newblock \emph{Journal of Computer and System Sciences}, 76\penalty0
  (8):\penalty0 759--777, 2010.

\bibitem[Devanur et~al.(2016)Devanur, Huang, and Psomas]{devanur2016sample}
Nikhil~R Devanur, Zhiyi Huang, and Christos-Alexandros Psomas.
\newblock The sample complexity of auctions with side information.
\newblock In \emph{Proceedings of the forty-eighth annual ACM symposium on
  Theory of Computing}, pages 426--439, 2016.

\bibitem[Duetting et~al.(2019)Duetting, Feng, Narasimhan, Parkes, and
  Ravindranath]{duetting2019optimal}
Paul Duetting, Zhe Feng, Harikrishna Narasimhan, David Parkes, and Sai~Srivatsa
  Ravindranath.
\newblock Optimal auctions through deep learning.
\newblock In \emph{International Conference on Machine Learning}, pages
  1706--1715, 2019.

\bibitem[Gonczarowski and Weinberg(2018)]{gonczarowski2018sample}
Yannai~A Gonczarowski and S~Matthew Weinberg.
\newblock The sample complexity of up-to-$\varepsilon$ multi-dimensional
  revenue maximization.
\newblock In \emph{2018 IEEE 59th Annual Symposium on Foundations of Computer
  Science (FOCS)}, pages 416--426. IEEE, 2018.

\bibitem[Green and Laffont(1986)]{green1986partially}
Jerry~R Green and Jean-Jacques Laffont.
\newblock Partially verifiable information and mechanism design.
\newblock \emph{The Review of Economic Studies}, 53\penalty0 (3):\penalty0
  447--456, 1986.

\bibitem[Gr{\"o}tschel et~al.(1984)Gr{\"o}tschel, Lov{\'a}sz, and
  Schrijver]{grotschel1984geometric}
Martin Gr{\"o}tschel, L{\'a}szl{\'o} Lov{\'a}sz, and Alexander Schrijver.
\newblock Geometric methods in combinatorial optimization.
\newblock In \emph{Progress in combinatorial optimization}, pages 167--183.
  Elsevier, 1984.

\bibitem[Haghtalab et~al.(2020)Haghtalab, Immorlica, Lucier, and
  Wang]{haghtalab2020maximizing}
Nika Haghtalab, Nicole Immorlica, Brendan Lucier, and Jack Wang.
\newblock Maximizing welfare with incentive-aware evaluation mechanisms, 2020.

\bibitem[Hardt et~al.(2016)Hardt, Megiddo, Papadimitriou, and
  Wootters]{hardt2016strategic}
Moritz Hardt, Nimrod Megiddo, Christos Papadimitriou, and Mary Wootters.
\newblock Strategic classification.
\newblock In \emph{Proceedings of the 2016 ACM conference on innovations in
  theoretical computer science}, pages 111--122, 2016.

\bibitem[Kephart and Conitzer(2015)]{kephart2015complexity}
Andrew Kephart and Vincent Conitzer.
\newblock Complexity of mechanism design with signaling costs.
\newblock In \emph{Proceedings of the 2015 International Conference on
  Autonomous Agents and Multiagent Systems}, pages 357--365, 2015.

\bibitem[Kephart and Conitzer(2016)]{kephart2016revelation}
Andrew Kephart and Vincent Conitzer.
\newblock The revelation principle for mechanism design with reporting costs.
\newblock In \emph{Proceedings of the 2016 ACM Conference on Economics and
  Computation}, pages 85--102, 2016.

\bibitem[Kleinberg and Raghavan(2019)]{kleinberg2019classifiers}
Jon Kleinberg and Manish Raghavan.
\newblock How do classifiers induce agents to invest effort strategically?
\newblock In \emph{Proceedings of the 2019 ACM Conference on Economics and
  Computation}, pages 825--844, 2019.

\bibitem[Krishnaswamy et~al.(2021)Krishnaswamy, Li, Rein, Zhang, and
  Conitzer]{krishnaswamy2021classification}
Anilesh Krishnaswamy, Haoming Li, David Rein, Hanrui Zhang, and Vincent
  Conitzer.
\newblock Classification with strategically withheld data.
\newblock In \emph{Proceedings of the AAAI Conference on Artificial
  Intelligence}, 2021.

\bibitem[Laffont and Tirole(1986)]{laffont1986using}
Jean-Jacques Laffont and Jean Tirole.
\newblock Using cost observation to regulate firms.
\newblock \emph{Journal of political Economy}, 94\penalty0 (3, Part
  1):\penalty0 614--641, 1986.

\bibitem[Lee and Sidford(2014)]{lee2014path}
Yin~Tat Lee and Aaron Sidford.
\newblock Path finding methods for linear programming: Solving linear programs
  in $\widetilde{O}(\sqrt{\mathrm{rank}})$ iterations and faster algorithms for
  maximum flow.
\newblock In \emph{2014 IEEE 55th Annual Symposium on Foundations of Computer
  Science}, pages 424--433. IEEE, 2014.

\bibitem[McAfee and McMillan(1987)]{mcafee1987competition}
R~Preston McAfee and John McMillan.
\newblock Competition for agency contracts.
\newblock \emph{The RAND Journal of Economics}, pages 296--307, 1987.

\bibitem[Perote and Perote-Pena(2004)]{perote2004strategy}
Javier Perote and Juan Perote-Pena.
\newblock Strategy-proof estimators for simple regression.
\newblock \emph{Mathematical Social Sciences}, 47\penalty0 (2):\penalty0
  153--176, 2004.

\bibitem[Schrijver(2000)]{schrijver2000combinatorial}
Alexander Schrijver.
\newblock A combinatorial algorithm minimizing submodular functions in strongly
  polynomial time.
\newblock \emph{Journal of Combinatorial Theory, Series B}, 80\penalty0
  (2):\penalty0 346--355, 2000.

\bibitem[Shen et~al.(2019)Shen, Tang, and Zuo]{shen2019automated}
Weiran Shen, Pingzhong Tang, and Song Zuo.
\newblock Automated mechanism design via neural networks.
\newblock In \emph{Proceedings of the 18th International Conference on
  Autonomous Agents and Multiagent Systems}, pages 215--223, 2019.

\bibitem[Yu(2011)]{yu2011mechanism}
Lan Yu.
\newblock Mechanism design with partial verification and revelation principle.
\newblock \emph{Autonomous Agents and Multi-Agent Systems}, 22\penalty0
  (1):\penalty0 217--223, 2011.

\bibitem[Zhang and Conitzer(2021)]{zhang2021incentive}
Hanrui Zhang and Vincent Conitzer.
\newblock Incentive-aware {PAC} learning.
\newblock In \emph{Proceedings of the AAAI Conference on Artificial
  Intelligence}, 2021.

\bibitem[Zhang et~al.(2019{\natexlab{a}})Zhang, Cheng, and
  Conitzer]{zhang2019distinguishing}
Hanrui Zhang, Yu~Cheng, and Vincent Conitzer.
\newblock Distinguishing distributions when samples are strategically
  transformed.
\newblock In \emph{Advances in Neural Information Processing Systems}, pages
  3187--3195, 2019{\natexlab{a}}.

\bibitem[Zhang et~al.(2019{\natexlab{b}})Zhang, Cheng, and
  Conitzer]{zhang2019samples}
Hanrui Zhang, Yu~Cheng, and Vincent Conitzer.
\newblock When samples are strategically selected.
\newblock In \emph{International Conference on Machine Learning}, pages
  7345--7353, 2019{\natexlab{b}}.

\bibitem[Zhang et~al.(2021)Zhang, Cheng, and Conitzer]{zhang2021classification}
Hanrui Zhang, Yu~Cheng, and Vincent Conitzer.
\newblock Classification with few tests through self-selection.
\newblock In \emph{Proceedings of the AAAI Conference on Artificial
  Intelligence}, 2021.

\end{thebibliography}
}

\clearpage

\appendix

\section{Generalizing to Combinatorial Costs}
\label{app:submodular}

In this section, we generalize the problem considered in the previous section, allowing the principal to have a combinatorial cost function over outcomes for each type.
The problem studied in the previous section can be viewed as a special case (where the principal's cost function is additive over types) of this general problem.
Before we proceed to the formal definition of the problem, to better motivate combinatorial cost functions, consider the following example.

\begin{example}
\label{ex:comb-c}
    Suppose in addition to an additive cost function $\{c_i\}_{i \in \Theta}$, the principal has to pay an overhead cost $c_0 > 0$ as long as any type receives a nontrivial outcome, i.e., if there exists $i \in \Theta$, such that $M(i) \in \mathcal{O} \setminus \{o_1\}$.
    In such cases, the principal's overall cost from executing a truthful deterministic mechanism $M$ can be written as
    \[
        c(M) = \sum_{i \in \Theta} c_i(M(i)) + c_0 \cdot \mathbb{I}[\exists i \in \Theta: M(i) \ne o_1],
    \]
    where $\mathbb{I}[\cdot]$ is the indicator of a statement.
\end{example}

In the above rather natural example, the principal's cost is no longer additive over types.
As a result, there is no way to properly formulate Example~\ref{ex:comb-c} using our previous definitions.
We generalize the principal's cost function, as well as the definition of mechanisms, as follows.

\paragraph{Notation.}
As before, let $\Theta = [n]$ be the set of types, $\mathcal{O} = \{o_j\}_{j \in [m]} \subseteq \mathbb{R}_+$ be the set of outcomes encoding the common utility function, and $R \subseteq \Theta \times \Theta$ be the reporting structure.
The principal's cost function $c: \mathcal{O}^\Theta \to \mathbb{R}_+$ now maps a vector $O = (O^i)_i$ of outcomes for all types to the principal's cost $c(O)$.
This subsumes the additive case, since one can set the cost function $c$ to be
\[
    c((O^i)_i) = \sum_{i \in \Theta} c_i(O^i).
\]
Because the cost function is now combinatorial, it matters how the mechanism combines outcomes for different types.
We therefore modify the definition of a (possibly randomized) mechanism $M \in \Delta(\Theta \to \mathcal{O}) = \Delta(\mathcal{O}^\Theta)$, such that it allows correlation across different types.
The principal's cost from executing a truthful mechanism $M$ is then
\[
    c(M) = \mathbb{E}[c((M(i))_i)].
\]
We treat $M$ as a distribution or a random variable over $\mathcal{O}^\Theta$ interchangeably.
Note that each type's utility is still independent of what other types get.
So for type $i$, the utility from executing mechanism $M$ is still
\[
    u_i(M) = \mathbb{E}[M(i)].
\]
And $M$ is truthful iff for any $(i_1, i_2) \in R$,
\[
    u_{i_1}(M) \ge u_{i_2}(M).
\]

In the rest of the section, we present combinatorial generalizations of all our algorithmic and structural results given in the previous section.

\subsection{General vs.\ Submodular Cost Functions}
Combinatorial functions in general are notoriously hard to optimize, even ignoring incentive issues.
To see the difficulty, observe that a combinatorial cost function $c: \mathcal{O}^\Theta \to \mathbb{R}_+$ over $\mathcal{O}^\Theta$ generally does not even admit a succinct representation (e.g., one whose size is polynomial in $m$ and $n$).
It is therefore infeasible to take the entire cost function as input to an algorithm.

To address this issue, the standard assumption in combinatorial optimization is that algorithms can access the combinatorial function through {\em value queries}.
That is, we are given an oracle that can evaluate the combinatorial function $c$ at any point $O \in \mathcal{O}^\Theta$, obtaining the value $c(O)$ in constant time.
For the rest of the paper, we assume that our algorithm can only access the cost function only through value queries.

Now suppose we are to design an algorithm to minimize an arbitrary combinatorial cost function, without any additional constraint.
That is, given a combinatorial cost function $c: \mathcal{O}^\Theta \to \mathbb{R}_+$, we wish to find a point $O \in \mathcal{O}^\Theta$, such that $c(O)$ is minimized over $\mathcal{O}^\Theta$.
The example below shows that any algorithm which interacts with $c$ only through value queries needs $\Omega(m^n)$ queries to obtain any nontrivial approximation to the above seemingly basic problem.

\begin{example}
    Let the cost function $c$ be generated in the following random way.
    A point $O^*$ is drawn from $\mathcal{O}^\Theta$ uniformly at random.
    $c$ is then constructed such that $c(O^*) = 0$, and $c(O) = 1$ for any $O \ne O^*$.
    To minimize $c$, the algorithm has to find $O^*$.
    This is equivalent to guessing a uniformly random number among $m^n$ numbers.
    To guess $O^*$ successfully with constant probability, one has to make $\Omega(m^n)$ guesses.
\end{example}

The above issue has been identified in combinatorial optimization since decades ago.
Despite the fact that general combinatorial cost functions are hard to minimize, researchers have developed efficient algorithms for combinatorial minimization for an important subclass of cost functions, namely submodular functions, defined below.

\begin{definition}[Submodular Functions]
    For any $O_1 = (O_1^i)_i \in \mathcal{O}^\Theta$ and $O_2 = (O_2^i)_i \in \mathcal{O}^\Theta$, let
    \[
        O_1 \wedge O_2 = (\min(O_1^i, O_2^i))_i \quad \text{and} \quad O_1 \vee O_2 = (\max(O_1^i, O_2^i))_i.
    \]
    A combinatorial cost function $c: \mathcal{O}^\Theta \to \mathbb{R}_+$ is submodular, if for any $O_1, O_2 \in \mathcal{O}^\Theta$,
    \[
        c(O_1) + c(O_2) \ge c(O_1 \wedge O_2) + c(O_1 \vee O_2).
    \]
\end{definition}

In the rest of the section, we focus on submodular cost functions.
For this important special case, we give efficient algorithms for finding optimal truthful deterministic/randomized mechanisms, as well as a sufficient condition for the existence of an optimal mechanism that is deterministic.

\subsection{Finding Optimal Deterministic Mechanisms}

First we present a polynomial-time combinatorial algorithm for finding optimal truthful deterministic mechanisms with partial verification, when the cost function is submodular.
The algorithm is based on the key observation that the space of truthful deterministic mechanisms is a distributive lattice (defined below in Lemma~\ref{lem:distributive-lattice}).
Given this observation, it is known that the problem can be reduced to minimizing a submodular function without additional constraints, which can be solved efficiently.

\begin{lemma}
\label{lem:distributive-lattice}
    Fix the set of types $\Theta$, the set of outcomes $\mathcal{O}$, and the reporting structure $R$.
    Let $\mathcal{T} \subseteq \mathcal{O}^\Theta$ be the space of all possible ways of assigning outcomes to types, such that no type has the incentive to misreport.
    That is,
    \[
        \mathcal{T} = \{O = (O^i)_i \in \mathcal{O}^\Theta \mid \forall (i_1, i_2) \in R,\, O^{i_1} \ge O^{i_2}\}.
    \]
    Then $\mathcal{T}$ is a distributive lattice, i.e., $\mathcal{T}$ satisfies the following conditions.
    \begin{itemize}
        \item For any $O_1, O_2 \in \mathcal{T}$, $O_1 \wedge O_2 \in \mathcal{T}$, and $O_1 \vee O_2 \in \mathcal{T}$.
        \item For any $O_1, O_2, O_3 \in \mathcal{T}$,
            $O_1 \vee (O_2 \wedge O_3) = (O_1 \vee O_2) \wedge (O_1 \vee O_3)$.
    \end{itemize}
\end{lemma}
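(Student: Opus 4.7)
The plan is to verify the two lattice axioms by pointwise reasoning, exploiting the fact that $\mathcal{T}$ is defined by coordinate-wise inequalities and that the meet and join operations are themselves defined coordinate-wise. Both parts will reduce to elementary facts about $(\mathbb{R}, \min, \max)$, so no new machinery is needed.

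For closure under $\wedge$ and $\vee$, I would fix $O_1, O_2 \in \mathcal{T}$ and an arbitrary pair $(i_1, i_2) \in R$. By the definition of $\mathcal{T}$, I have $O_1^{i_1} \ge O_1^{i_2}$ and $O_2^{i_1} \ge O_2^{i_2}$. Then the $i_1$-th coordinate of $O_1 \wedge O_2$ is $\min(O_1^{i_1}, O_2^{i_1})$, while the $i_2$-th coordinate is $\min(O_1^{i_2}, O_2^{i_2})$. Monotonicity of $\min$ in each argument yields $\min(O_1^{i_1}, O_2^{i_1}) \ge \min(O_1^{i_2}, O_2^{i_2})$, so $O_1 \wedge O_2 \in \mathcal{T}$. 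The identical argument with $\max$ in place of $\min$ gives $O_1 \vee O_2 \in \mathcal{T}$. Also, both vectors still live in $\mathcal{O}^\Theta$ since each coordinate is one of the two original coordinate values, hence still in $\mathcal{O}$.

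For distributivity, I would observe that both $\vee$ and $\wedge$ on tuples are defined coordinate-wise, so the identity $O_1 \vee (O_2 \wedge O_3) = (O_1 \vee O_2) \wedge (O_1 \vee O_3)$ holds if and only if it holds coordinate-by-coordinate. This reduces the claim to the standard identity $a \vee (b \wedge c) = (a \vee b) \wedge (a \vee c)$ on the totally ordered set $(\mathbb{R}, \le)$, which is immediate by splitting into cases according to the ordering of $a, b, c$ (equivalently, noting that any chain is a distributive lattice). Combining the two bullets yields the lemma.

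There is no real obstacle here; the only care needed is to not confuse the order in which the inequality in $\mathcal{T}$ is written ($O^{i_1} \ge O^{i_2}$) with the direction of misreporting in $R$, and to note that $\mathcal{O}$ itself is closed under coordinate-wise $\min$ and $\max$ because $\mathcal{O} \subseteq \mathbb{R}_+$ is finite and totally ordered, so the meet and join of two points in $\mathcal{O}^\Theta$ indeed remain in $\mathcal{O}^\Theta$ rather than merely in $\mathbb{R}_+^\Theta$.
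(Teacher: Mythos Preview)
Your proposal is correct and follows essentially the same approach as the paper: coordinate-wise monotonicity of $\min$ and $\max$ for closure, and coordinate-wise reduction to the distributive law on a chain for the second bullet. Your added remark that $\mathcal{O}$ is closed under $\min$ and $\max$ (so the meet and join stay in $\mathcal{O}^\Theta$) is a nice point of care that the paper leaves implicit.
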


\begin{proof}
    Consider the first property.
    Fix any $O_1, O_2 \in \mathcal{T}$, and let $O_- = O_1 \wedge O_2$, and $O_+ = O_1 \vee O_2$.
    For any $(i_1, i_2) \in R$, since $O_1^{i_1} \ge O_1^{i_2}$ and $O_2^{i_1} \ge O_2^{i_2}$, we have
    \begin{align*}
        O_-^{i_1} & = \min(O_1^{i_1}, O_2^{i_1}) \ge \min(O_1^{i_2}, O_2^{i_2}) = O_-^{i_2}.
    \end{align*}
    This implies $O_- \in \mathcal{T}$.
    Similarly we may show $O_+ \in \mathcal{T}$.
    In other words, the first property holds.

    For the second property, simply consider the $i$-th coordinate for any $i \in \Theta$.
    Fix $O_1, O_2, O_3 \in \mathcal{T}$, we have
    \[
        \max(O_1^i, \min(O_2^i, O_3^i)) = \min(\max(O_1^i, O_2^i), \max(O_1^i, O_3^i)).
    \]
    Since this is true for any $i$, the second property follows immediately.
\end{proof}

Given Lemma~\ref{lem:distributive-lattice}, we can apply the algorithm and the reduction by \citet{schrijver2000combinatorial} to obtain an efficient algorithm directly.

\begin{corollary}
\label{cor:submodular-deterministic}
    There exists a polynomial-time algorithm which accesses the cost function via value queries only, and computes an optimal deterministic truthful mechanism when partial verification is allowed and the cost function is submodular.
\end{corollary}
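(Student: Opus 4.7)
The plan is to recognize $\mathcal{T}$ as the ideal lattice of an explicit polynomial-size poset and then invoke \citet{schrijver2000combinatorial}'s strongly polynomial combinatorial algorithm for submodular function minimization over a distributive lattice. First I would verify that the restriction $c|_{\mathcal{T}}$ is submodular on the sublattice $(\mathcal{T}, \wedge, \vee)$: Lemma~\ref{lem:distributive-lattice} guarantees that $\mathcal{T}$ is closed under pointwise meet and join, so the inequality $c(O_1)+c(O_2) \ge c(O_1 \wedge O_2)+c(O_1 \vee O_2)$ is inherited for every $O_1, O_2 \in \mathcal{T}$ verbatim from submodularity of $c$ on $\mathcal{O}^\Theta$.

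Next I would embed $\mathcal{T}$ as a ring family on an explicit ground set. Set $E = \Theta \times [m-1]$, and for each $O \in \mathcal{O}^\Theta$ define $S(O) = \{(i,k) \in E : O^i \ge o_{k+1}\}$. A direct calculation gives $S(O_1 \wedge O_2) = S(O_1) \cap S(O_2)$ and $S(O_1 \vee O_2) = S(O_1) \cup S(O_2)$, so $O \mapsto S(O)$ is a lattice isomorphism from $\mathcal{O}^\Theta$ onto its image. The image of $\mathcal{T}$ consists precisely of the subsets of $E$ that are downward-closed with respect to the poset $P$ on $E$ generated by $(i, k-1) \preceq (i, k)$ for all $i \in \Theta$ and $k \in \{2, \dots, m-1\}$ (the intrinsic ordering of outcomes) together with $(i_1, k) \preceq (i_2, k)$ for every $(i_1, i_2) \in R$ and $k \in [m-1]$ (truthfulness). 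The poset $P$ has $(m-1)n$ elements and is constructible in $\mathrm{poly}(m,n)$ time, and the pulled-back cost $c'$ defined by $c'(S(O)) = c(O)$ is submodular on the ideal lattice of $P$ and evaluable via a single value query to $c$.

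I would then run the algorithm of \citet{schrijver2000combinatorial} on input $(P, c')$; it minimizes a submodular function over the ideal lattice of a given poset using only value queries and a number of arithmetic operations polynomial in $|E| = (m-1)n$. The returned minimizer $S^\star$ decodes back to an optimal deterministic truthful mechanism via $M(i) = o_{1 + |\{k : (i,k) \in S^\star\}|}$.

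The one step in the plan that requires care is the reduction from ``submodular minimization on the sublattice $\mathcal{T} \subseteq \mathcal{O}^\Theta$'' to ``submodular minimization on the ideal lattice of a poset'' in a form that Schrijver's algorithm can directly consume. This is precisely the role of the encoding $S(\cdot)$ together with Birkhoff's representation theorem, and the extension of Schrijver's machinery from unconstrained minimization on $2^E$ to ring-family minimization is classical in combinatorial optimization, achievable either through a direct adaptation or, given integer-valued bounded costs, through adding a large modular penalty for each violated covering relation of $P$ and minimizing the resulting submodular function on all of $2^E$.
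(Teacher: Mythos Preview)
Your proposal is correct and follows essentially the same approach as the paper: both reduce the problem to submodular minimization over the distributive lattice $\mathcal{T}$ and invoke \citet{schrijver2000combinatorial}. The paper's proof is terser---it simply cites Schrijver's ring-family reduction with a footnote that it also works for distributive lattices---whereas you have explicitly written out the standard chain-product encoding $S(O) = \{(i,k) : O^i \ge o_{k+1}\}$ that realizes $\mathcal{T}$ as the ideal lattice of a polynomial-size poset, which is precisely the content that footnote is pointing at.
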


\begin{proof}
    Let $\mathcal{T}$ be the family of truthful assignments defined in Lemma~\ref{lem:distributive-lattice}.
    The problem of finding an optimal deterministic truthful mechanism can be equivalently formulated as the following optimization problem.
    \[
        \min_{O \in \mathcal{T}} c(O).
    \]
    This can be solved by applying the reduction in Section~6 of \cite{schrijver2000combinatorial}\footnote{Although the reduction presented therein is for ring families, one may check it also works for distributive lattices.} to any algorithm for minimizing submodular functions (e.g., the one given in \cite{schrijver2000combinatorial}).
\end{proof}

We remark that the algorithm by \citet{schrijver2000combinatorial} can be applied as well in the classical additive setting, but due to its generality, is significantly less efficient than Algorithm~\ref{alg:additive-deterministic}.

\subsection{Sufficient Condition for the Optimality of Deterministic Mechanisms}
We have shown in Example~\ref{ex:gap} that the gap between deterministic and randomized mechanisms can be arbitrarily large.
Restricted to additive cost functions, Lemma~\ref{lem:additive-convex} gives a sufficient condition under which there exists an optimal mechanism that is deterministic.
We present in this subsection a combinatorial version of this structural result when the outcome space is binary, i.e., when $m = 2$.

\begin{lemma}[Optimality of Deterministic Mechanisms with Binary Outcomes]
\label{lem:submodular-binary}
    When the outcome space is binary, i.e., $|\mathcal{O}| = 2$, and the principal's cost function is submodular, there is an optimal truthful mechanism that is deterministic, even when partial verification is allowed.
\end{lemma}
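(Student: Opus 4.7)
The plan is to use the Lovász extension of submodular set functions. Since $|\mathcal{O}| = 2$, write $\mathcal{O} = \{o_1, o_2\}$ with $o_1 < o_2$, and identify $\mathcal{O}^\Theta$ with $\{0,1\}^n$ via $o_1 \mapsto 0$, $o_2 \mapsto 1$. A deterministic assignment then corresponds to a subset $S \subseteq \Theta$ (the types receiving $o_2$), and $c$ becomes a submodular set function on $2^\Theta$ in the standard sense. Given any optimal truthful randomized mechanism $M$, let $p_i = \Pr[M(i) = o_2]$ be its marginals. Since $o_2 > o_1$, the truthfulness constraint $u_{i_1}(M) \ge u_{i_2}(M)$ for $(i_1, i_2) \in R$ reduces to the monotonicity condition $p_{i_1} \ge p_{i_2}$ on the marginals.

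Next, I would define the \emph{threshold mechanism} $M'$: sample $\lambda \sim U[0,1]$ and let $M'(i) = o_2$ iff $p_i > \lambda$. This mechanism has the same marginals as $M$ (each type $i$ receives $o_2$ with probability $p_i$), so in particular $M'$ is itself truthful. More strongly, every realization of $M'$ is an individually truthful deterministic assignment: fix any $\lambda$ and let $S_\lambda = \{i : p_i > \lambda\}$. If $(i_1, i_2) \in R$ and $i_2 \in S_\lambda$, then $p_{i_1} \ge p_{i_2} > \lambda$, so $i_1 \in S_\lambda$; hence the indicator of $S_\lambda$ satisfies $O^{i_1} \ge O^{i_2}$ for every $(i_1, i_2) \in R$.

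The final step invokes the standard Lovász-extension identity: for a submodular function $c$ on $\{0,1\}^n$, its Lovász extension
\[
    \hat{c}(p) = \mathbb{E}_{\lambda \sim U[0,1]}[c(S_\lambda)]
\]
coincides with its convex closure, i.e., $\hat{c}(p) = \min\{\mathbb{E}[c(N)] : N \text{ is a distribution over } \{0,1\}^n \text{ with marginals } p\}$. Applied here, this yields $\mathbb{E}[c(M')] = \hat{c}(p) \le \mathbb{E}[c(M)]$. Thus $M'$ is a randomized truthful mechanism with cost no larger than that of $M$, whose entire support consists of deterministic truthful assignments. An averaging argument then picks out a single assignment in the support of $M'$ with cost at most $\mathbb{E}[c(M')] \le \mathbb{E}[c(M)]$, giving the desired optimal deterministic truthful mechanism.

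The main obstacle is really just verifying the two key clean facts above: that threshold sets of $R$-monotone marginals automatically respect the truthfulness constraints of $R$, and that the Lovász extension coincides with the convex closure on submodular functions (so that passing from $M$ to the threshold distribution with matching marginals can only decrease the expected cost). Both are short once set up, and together they collapse the argument into a direct rounding.
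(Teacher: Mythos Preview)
Your proof is correct. It differs from the paper's written proof in packaging rather than substance: the paper argues from first principles via an uncrossing argument, choosing an optimal $M$ that maximizes the potential $\mathbb{E}[|M|^2]$ and showing that any crossing pair in the support can be replaced by their meet and join (preserving marginals, not increasing cost by submodularity, strictly increasing the potential), so the support of $M$ is a chain; it then observes that the threshold mechanism $M_r$ is identically distributed to $M$ and derandomizes. You instead invoke the standard theorem that the Lov\'asz extension of a submodular set function equals its convex closure, which is exactly the content of that uncrossing argument distilled into a citation. The paper itself flags your route in its post-proof remarks as the ``alternative high-level interpretation'' via the Lov\'asz extension. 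What the paper's hands-on uncrossing buys is self-containment and a template that generalizes to the $m>2$ case (Lemma~\ref{lem:envelope} and Algorithm~\ref{alg:envelope}); what your approach buys is brevity and a cleaner separation between the combinatorial fact (Lov\'asz $=$ convex closure) and the mechanism-design step (threshold sets of $R$-monotone marginals are truthful).
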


\begin{proof}
    The overall plan is similar to that of the proof of Lemma~\ref{lem:additive-convex}.
    We begin with a (possibly randomized) optimal truthful mechanism $M$, and show that without loss of generality, we may assume its support has some monotone structure.
    We then round this mechanism, such that the resulting deterministic mechanism is always truthful, and the expected cost of the rounded mechanism is equal to the cost of $M$.

    Without loss of generality, suppose $\mathcal{O} = \{0, 1\}$.
    Observe that $\mathcal{O}^\Theta$ is isomorphic to $2^\Theta$, so in the rest of the proof, we interchangeably represent an outcome vector $O$ as a subset of $\Theta$, i.e., the set
    \[
        \{i \in \Theta \mid O^i = 1\}.
    \]
    Let $M$ be any (possibly randomized) optimal truthful mechanism.
    Below we treat $M$ as a random variable distributed over $\mathcal{O}^\Theta$, or interchangeably, a random subset of $\Theta$.
    For any $O \in \mathcal{O}^\Theta$, let $p(O) = \Pr[M = O]$ be the probability that $M$ assigns outcomes $O$ to types.
    We further require $M$ to maximize the following potential function among all optimal truthful mechanisms.
    \[
        \mathbb{E}[|M|^2] = \sum_{O \in \mathcal{O}^\Theta} p(O) \cdot |O|^2.
    \]
    We argue below that for such an $M$, no two outcome vectors in the support of $M$ ``cross.''

    We say two outcome vectors $O_1$ and $O_2$ (represented as sets) cross, if $O_1 \not\subseteq O_2$ and $O_2 \not\subseteq O_1$.
    Toward a contradiction, suppose $O_1$ and $O_2$ cross, where without loss of generality $p(O_1) \ge p(O_2) > 0$.
    Let $p = p(O_2)$.
    We show that moving probability mass from $O_1$ and $O_2$ to $O_1 \cap O_2$ and $O_1 \cup O_2$ simultaneously preserves truthfulness, does not increase the cost, and strictly increases the potential of $M$, which contradicts the choice of $M$.

    To be precise, we decrease $p(O_1)$ and $p(O_2)$ simultaneously by $p$, and increase $p(O_1 \cap O_2)$ and $p(O_1 \cup O_2)$ simultaneously by $p$.
    To see why truthfulness is preserved, observe that the expected utility of any type does not change after the modification.
    The change of the cost can be written as
    \begin{align*}
        \Delta c(M) & = p \cdot (c(O_1 \cap O_2) + c(O_1 \cup O_2) - c(O_1) - c(O_2)) \\
        & \le p \cdot (c(O_1) + c(O_2) - c(O_1) - c(O_2)) \tag{submodularity of $c$} \\
        & = 0,
    \end{align*}
    so the cost does not increase.
    Finally, the change of the potential is
    \[
        p \cdot (|O_1 \cap O_2|^2 + |O_1 \cup O_2|^2 - |O_1|^2 - |O_2|^2).
    \]
    It is easy to check the above is strictly positive as long as $O_1$ and $O_2$ cross.

    From now on we assume no two outcomes in the support of $M$ cross, or equivalently, the support of $M$ is a family of nested subsets of $\Theta$.
    For any $r \in [0, 1]$, let $M_r \subseteq \Theta$ be such that
    \[
        M_r = \{i \in \Theta \mid u_i \ge r\}.
    \]
    Observe that $M_r$ is truthful for any $r \in [0, 1]$.
    In fact, for any $(i_1, i_2) \in R$, we always have
    \[
        i_2 \in M_r \Longrightarrow i_1 \in R.
    \]
    Truthfulness then follows.

    Consider the random (but not randomized) mechanism $M_r$ when $r$ is uniformly distributed over $[0, 1]$.
    We argue below that the expected cost of $M_r$ is precisely that of $M$.
    In fact, $M_r$ and $M$ are even identically distributed.
    For any $i \in \Theta$, let the expected utility of type $i$ be $u_i = \Pr[i \in M]$.
    Without loss of generality, suppose $M$ satisfies $u_i \ge u_{i + 1}$ for any $i \in [n - 1]$.
    To show $M_r$ and $M$ are identically distributed, we only need to show that
    \[
        p(O) > 0 \Longrightarrow \exists i \in \Theta, O = [i].
    \]
    In other words, $M$ only assigns outcomes that are prefixes of $\Theta$.
    Given this, $M_r$ is the only distribution that gives each type $i$ expected utility $u_i$ simultaneously.

    To see why the above claim is true, suppose there exists some $O \subseteq \Theta$ where $1 \notin O$ and $p(O) > 0$.
    Then since the support of $M$ is a nested family of subsets of $\Theta$, for any $i \in O$, regardless of the realization of $M$, we always have
    \[
        1 \in M \Longrightarrow i \in O.
    \]
    As a result,
    \[
        u_i = \sum_{O' \subseteq \Theta: i \in O'} p(O') \ge p(O) + \sum_{O' \subseteq \Theta: 1 \in O'} p(O') = p(O) + u_1 > u_1,
    \]
    a contradiction.
    This concludes the proof.
\end{proof}

We make the following remarks regarding Lemma~\ref{lem:submodular-binary}.
\begin{itemize}
    \item The binary outcomes assumption, despite being more restrictive than the general model, still captures many real-life applications.
    In particular, it models binary classification problems where one label is more desirable than the other for all agents.
    Common examples include hiring decisions, university admissions, etc.
    Moreover, such decisions are generally correlated over types (e.g., universities cannot admit too many students) --- this is captured by the principal's submodular cost function.
    \item The proof of Lemma~\ref{lem:submodular-binary} can be alternatively interpreted in the following way.
    Without loss of generality, any optimal mechanism corresponds to a point on the convex envelope of the principal's cost function.
    And for submodular cost functions particularly, this convex envelope happens to coincide with the Lov\'{a}sz extension (see \cite{grotschel1984geometric}), which can be derandomized into deterministic mechanisms preserving truthfulness.
    We will further develop this intuition in the next result, which is an efficient algorithm for finding optimal truthful mechanisms for any submodular function.
\end{itemize}

\subsection{An Efficient Algorithm for Computing Optimal Randomized Mechanisms}

In this subsection, we present an algorithm for finding an optimal mechanism with arbitrary submodular cost functions.

Our algorithm, Algorithm~\ref{alg:submodular-randomized}, again builds on the intuition that for optimal mechanisms, only the convex envelope of the cost function matters.
The problem of finding optimal mechanisms can therefore be formulated as a convex program.
However, unlike in Algorithm~\ref{alg:additive-randomized}, with submodular cost functions, it is not clear how one can efficiently evaluate the convex envelope of the cost function.

To get around this issue, instead of parametrizing by the target utilities, we parametrize the convex envelope by the marginal probabilities $\{p_{i, j}\}_{i \in \Theta, j \in [m]}$, where $p_{i, j}$ is the probability that type $i$ gets outcome $o_j$.
One of the key ingredients of Algorithm~\ref{alg:submodular-randomized} is a subroutine (Algorithm~\ref{alg:envelope}) which efficiently interprets each point on the convex envelope as a convex combination of integral points, corresponding to a distribution over combinations of outcomes.
In other words, given the desired marginal probabilities, Algorithm~\ref{alg:envelope} finds a randomized truthful mechanism realizing these marginal probabilities which minimizes the principal's expected cost.

\newcommand{\prob}{\mathrm{prob}}

\begin{algorithm}[t]
\SetAlgoNoLine
\KwIn{The set of types $\Theta$, the principal's submodular cost function $c$, the set of outcomes $O$ (which encodes the common utility function), the reporting structure $R$, and a precision parameter $\varepsilon$.}
\KwOut{An optimal truthful mechanism $M \in \Delta(\Theta \to \mathcal{O})$.}

Compute an $\varepsilon$-approximately optimal solution $p^*_{i,j}$ to the following convex program using the ellipsoid method (see, e.g., \cite{bubeck2015convex}), and call Algorithm~\ref{alg:envelope} with parameters $\{p_{i, j}\}$ for evaluating $\prob(\cdot \mid \{p_{i, j}\})$
\[
\begin{aligned}
    \min \quad & \sum_{O \in \mathcal{O}^\Theta} \prob(O \mid \{p_{i, j}\}_{i \in \Theta, j \in [m]}) \cdot c(O) & \\
    \text{s.t.} \quad & \sum_{j \in [m]} p_{i_1, j} \cdot o_j \ge \sum_{j \in [m]} p_{i_2, j} \cdot o_j & \forall (i_1, i_2) \in R \\
    & \sum_{j \in [m]} p_{i, j} = 1 & \forall i \in \Theta \\
    & p_{i, j} \ge 0 & \forall i \in \Theta, j \in [m]\text{\;}
\end{aligned}
\]

\For{each $O \in \mathcal{O}^\Theta$}{
    $\Pr[M = O] \leftarrow \prob(O \mid \{p_{i, j}^*\})$\; 
}
\Return $M$\;

\caption{Finding an optimal mechanism with submodular cost functions.}
\label{alg:submodular-randomized}
\end{algorithm}

\begin{algorithm}[t]
\SetAlgoNoLine
\KwIn{Marginal probabilities $\{p_{i, j}\}_{i, j}$, where $p_{i, j}$ is the desired probability that type $i$ gets outcome $o_j$.}
\KwOut{A distribution $\{\prob(O \mid \{p_{i, j}\})\}_{O \in \mathcal{O}^\Theta}$ over combinations of outcomes.}

Let $\prob(O) \leftarrow 0$ for all $O \in \mathcal{O}^\Theta$\;
\While{there is some $p_{i, j} > 0$}{
    \For{each $i \in \Theta$}{
        $t_i \leftarrow \max\{j \in [m] \mid p_{i, j} > 0\}$\;
    }
    $\delta \leftarrow \min_i p_{i, t_i}$; \quad $\prob((t_i)_{i \in [n]}) \leftarrow \delta$\;
    \For{each $i \in \Theta$}{
        $p_{i, t_i} \leftarrow p_{i, t_i} - \delta$\;
    }
    return $\{\prob(O)\}_{O \in \mathcal{O}^\Theta}$\;
}

\caption{Algorithm for interpreting the convex envelope.}
\label{alg:envelope}
\end{algorithm}

\begin{theorem}
\label{thm:submodular-randomized}
    When the cost function $c$ is submodular and bounded, for any desired additive error $\varepsilon > 0$, Algorithm~\ref{alg:submodular-randomized} finds an $\varepsilon$-approximately optimal (possibly randomized) truthful mechanism~\footnote{An $\varepsilon$-approximately optimal truthful mechanism is a truthful mechanism whose expected cost is at most $\varepsilon$ larger than the minimum possible cost of any truthful mechanism.} in time $\mathrm{poly}(n, m, \log(1 / \varepsilon))$, even if partial verification is allowed.
\end{theorem}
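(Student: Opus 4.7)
The plan is to argue that Algorithm~\ref{alg:submodular-randomized} correctly reformulates the problem of finding an optimal randomized truthful mechanism as a convex program whose variables $p_{i,j}$ are the marginal probabilities that type $i$ receives outcome $o_j$. First, I would observe that any truthful mechanism $M$ is determined (for the purpose of both truthfulness and expected cost given a specific joint distribution) by its marginals $p_{i,j} = \Pr[M(i) = o_j]$: the truthfulness constraint $u_{i_1}(M) \ge u_{i_2}(M)$ translates exactly to $\sum_j p_{i_1, j}\cdot o_j \ge \sum_j p_{i_2, j}\cdot o_j$, and the marginals must form valid distributions per type. Conversely, since truthfulness depends only on expected utilities (hence only on marginals), any joint distribution realizing feasible marginals yields a truthful mechanism. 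Hence the problem decomposes into (i) choosing marginals subject to the linear constraints, and (ii) given marginals, choosing the joint distribution minimizing expected cost.

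The key technical step is to analyze Algorithm~\ref{alg:envelope}, which solves part (ii) by a greedy ``comonotonic'' coupling: at each iteration it aligns the highest remaining outcome of each type and assigns as much mass as possible to the resulting outcome vector. I would verify two properties. \emph{Termination and marginal correctness:} at each iteration at least one $p_{i, t_i}$ becomes zero while each type's total remaining mass decreases by the same $\delta$, so after at most $mn$ iterations all marginals are exhausted, and a straightforward accounting shows the output has the prescribed marginals. \emph{Optimality of the coupling:} by an uncrossing argument analogous to the proof of Lemma~\ref{lem:submodular-binary}, whenever two vectors $O_1, O_2$ in the support of a coupling with the given marginals are incomparable, shifting mass from $\{O_1, O_2\}$ to $\{O_1 \wedge O_2, O_1 \vee O_2\}$ preserves all marginals (since at each coordinate the multiset of values is unchanged) and weakly decreases expected cost by submodularity. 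Iterating until no incomparable pairs remain forces the support to be a chain in $\mathcal{O}^\Theta$, and such a chain-supported distribution is uniquely determined by its marginals, coinciding with the output of Algorithm~\ref{alg:envelope}. Consequently $f(\{p_{i,j}\}) := \sum_{O} \prob(O \mid \{p_{i,j}\}) \cdot c(O)$ equals the minimum expected cost over all couplings with marginals $\{p_{i,j}\}$, which is precisely the convex envelope of $c$ evaluated at that point and is therefore a convex function of the $p_{i,j}$.

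Given convexity, the program in Algorithm~\ref{alg:submodular-randomized} has $mn$ variables and $O(n^2 + mn)$ linear constraints, so the ellipsoid method yields an $\varepsilon$-approximate optimum in $\mathrm{poly}(n, m, \log(1/\varepsilon))$ iterations, provided $f$ and a subgradient can be evaluated efficiently. Each call to Algorithm~\ref{alg:envelope} terminates in $O(mn)$ steps and returns a decomposition of $\{p_{i,j}\}$ into a chain of outcome vectors; the corresponding coefficients $c(O)$ at those vertices yield a valid subgradient in the standard way (the convex envelope is linear on each cell of the comonotone decomposition). Since $c$ is bounded, $f$ is Lipschitz and the ellipsoid analysis yields the stated additive error. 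The main obstacle I anticipate is the uncrossing/optimality argument above: while the binary-outcome case (Lemma~\ref{lem:submodular-binary}) provides the template, the multi-valued setting requires care in formalizing a potential (e.g.\ $\sum_O \Pr[\cdot = O]\,\|O\|_2^2$) that strictly increases under each meet--join swap of incomparable vectors, certifying convergence to the chain-supported coupling produced by Algorithm~\ref{alg:envelope} and, via the optimization of this potential among optimal couplings, uniqueness.
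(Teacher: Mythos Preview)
Your plan mirrors the paper's proof almost exactly: reduce to marginals, show Algorithm~\ref{alg:envelope} terminates in $O(mn)$ steps, prove via an uncrossing argument that any cost-minimizing coupling can be taken chain-supported and that the chain-supported coupling with given marginals is unique, conclude convexity, and invoke the ellipsoid method with a subgradient derived from the piecewise-linear structure of the decomposition.

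There is one concrete error. The potential you propose, $\sum_O \Pr[\cdot = O]\,\|O\|_2^2$ with $\|O\|_2^2 = \sum_i (O^i)^2$, is \emph{invariant} under the meet--join swap: at each coordinate the multiset $\{O_1^i, O_2^i\}$ equals $\{\min(O_1^i,O_2^i), \max(O_1^i,O_2^i)\}$, so $\sum_i (O_1^i)^2 + \sum_i (O_2^i)^2$ is unchanged. Hence this potential cannot certify strict progress. The paper instead uses $w(p) = \sum_O p(O)\bigl(\sum_i O^i\bigr)^2$, i.e.\ the square of the $\ell_1$-mass of $O$. This works because $\sum_i \min(O_1^i,O_2^i) + \sum_i \max(O_1^i,O_2^i) = \sum_i O_1^i + \sum_i O_2^i$ while, whenever $O_1$ and $O_2$ are incomparable, the spread strictly increases: $\sum_i \max - \sum_i \min = \sum_i |O_1^i - O_2^i| > \bigl|\sum_i (O_1^i - O_2^i)\bigr|$. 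With equal sums and strictly larger spread, the sum of squares strictly increases. The paper then does not iterate the swap; it picks a cost-minimizing coupling that \emph{maximizes} $w$ and derives a contradiction from any crossing pair, which sidesteps the convergence issue you allude to. Replacing your potential with $\bigl(\sum_i O^i\bigr)^2$ and using the maximizer-of-$w$ formulation closes the gap; the rest of your outline is correct and matches the paper.
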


\begin{proof}
    Consider the program in Algorithm~\ref{alg:submodular-randomized}.
    Observe there are $mn$ variables, namely $\{p_{i, j}\}$, and $O(n^2 + mn)$ linear constraints in the program.
    In order for the program to be efficiently solvable, we only need to show the following three claims hold.
    \begin{itemize}
        \item The subroutine for evaluating the convex envelope, Algorithm~\ref{alg:envelope}, runs in polynomial time.
        \item The objective is convex in $\{p_{i, j}\}$. 
        \item A first-order oracle, which computes a (sub)gradient of the objective function at any point, can be efficiently implemented.
        (The ellipsoid method also requires an efficient separation oracle, which for our program exists straightforwardly, since there are only $\mathrm{poly}(n, m)$ linear constraints.)
    \end{itemize}
    The second claim and the third claim also imply the correctness of the Algorithm.
    Below we prove the three claims.

    Consider the first claim.
    We only need to show that the while-loop at line~2 repeats only polynomially many times.
    Consider the following potential function $\phi$.
    \[
        \phi(\{p_{i, j}\}) = \sum_{i \in \Theta} \max \{j \in [m] \mid p_{i, j} > 0\}.
    \]
    Before line~2, the value of $\phi$ is at most $mn$.
    Observe that $\phi$ is monotone in $\{p_{i, j}\}$, and the latter never increase during the execution of the loop.
    Moreover, in each repetition of the loop, $\phi$ decreases at least by $1$.
    This is because after the update in line~9, for some $i \in \Theta$, $p_{i, t_i}$ becomes $0$, and as a result, $\max \{j \in [m] \mid p_{i, j} > 0\}$ decreases at least by $1$.
    When $\phi$ becomes $0$, it must be the case that $p_{i, j} = 0$ for any $i \in \Theta, j \in [m]$, so the loop terminates.
    Therefore the while-loop repeats at most $mn$ times, which implies the first claim.

    Now consider the second claim.
    We show that in Algorithm~\ref{alg:envelope}, the output distribution $\{p(O)\}$ minimizes the expected cost
    \[
        \sum_{O \in \mathcal{O}^\Theta} p(O) \cdot c(O)
    \]
    among all distributions whose marginals are $\{p_{i, j}\}$ --- this is equivalent to the second claim.
    We first prove the following characterization of the output distribution $\{p(O)\}$.

    \begin{lemma}
    \label{lem:envelope}
        The output distribution $\{p(O)\}$ of Algorithm~\ref{alg:envelope} is the only distribution over $O^\Theta$ satisfying the following properties.
        \begin{itemize}
            \item $\{p(O)\}$ induce the input marginal probabilities $\{p_{i, j}\}$ over type-outcome pairs.
            \item For any $O_1, O_2 \in \mathcal{O}^\Theta$, if $O_1 \wedge O_2 \notin \{O_1, O_2\}$, then either $p(O_1) = 0$ or $p(O_2) = 0$.
            In other words, no two combinations of outcomes in the support of $\{p(O)\}$ cross.
        \end{itemize}
    \end{lemma}
    \begin{proof}
        The first bullet point is clear from the construction of $\{p(O)\}$.
        We therefore focus on the second bullet point.
        We first show that for any marginal probabilities $\{p_{i, j}\}$ and distribution $\{p'(O)\}$, if (1) $\{p'(O)\}$ induce $\{p_{i, j}\}$ and no two combinations of outcomes in the support of $\{p'(O)\}$ cross, then the topmost combination of outcomes $O_t$, where
        \[
            O_t^i = \max\{o_j \mid j \in [m], p_{i, j} > 0\},
        \]
        must have probability exactly
        \[
            p'(O_t) = \min_{i \in \Theta} p_{i, t_i},
        \]
        where $t_i = \max\{j \in [m] \mid p_{i, j} > 0\}$.

        Let $\delta = \min_{i \in \Theta} p_{i, t_i}$.
        Observe that $p'(O_t) \le \delta$.
        Suppose toward a contradiction that $p'(O_t) < \delta$.
        Since no two combinations of outcomes in the support of $\{p'(O)\}$ cross, we can order the support of $\{p'(O)\}$ as $O_1, \dots, O_\ell$, where $\ell$ is the size of the support, such that for any $i \in [\ell - 1]$,
        \[
            O_i \wedge O_{i + 1} = O_{i + 1}.
        \]
        Clearly we have $O_t \wedge O_1 = O_1$.
        Consider the following two cases.
        \begin{itemize}
            \item $O_t \ne O_1$.
            In other words, there is some $i^* \in \Theta$, such that $O_t^{i^*} > O_1^{i^*}$.
            As a result, for any $k \in [\ell]$,
            \[
                O_k^{i^*} \le O_1^{i^*} < O_t^{i^*} \le o_{t_{i^*}}.
            \]
            Then we have
            \[
                0 < \delta \le p_{i^*, t_{i^*}} = \sum_{k \in [\ell]} p'(O_k) \cdot \mathbb{I}[O_k^{i^*} = o_{t_{i^*}}] = 0,
            \]
            a contradiction.
            \item $O_t = O_1$.
            There is some $i^* \in \Theta$, such that $O_1^{i^*} > O_2^{i^*}$.
            As a result, for any $2 \le k \le \ell$,
            \[
                O_k^{i^*} \le O_2^{i^*} < O_1^{i^*} = o_{t_{i^*}}.
            \]
            So we have
            \[
                \delta \le p_{i^*, t_{i^*}} = \sum_{k \in [\ell]} p'(O_k) \cdot \mathbb{I}[O_k^{i^*} = o_{t_{i^*}}] \le p'(O_1) < \delta,
            \]
            a contradiction.
        \end{itemize}
        So in any case, we must have $p'(O_t) = \delta$.

        Now observe that the above argument does not depend on the fact that for any $i \in \Theta$, $\sum_{j \in [m]} p_{i, j} = 1$.
        Therefore, we can repeatedly apply the characterization of the probability of the topmost combination.
        That is, we first compute the probability of the topmost combination, and subtract the marginal probabilities contributed by this topmost combination from $\{p_{i, j}\}$.
        For the new marginal probabilities, the characterization still applies to the new topmost combination, by which we can determine the probability of that combination.
        This is precisely the procedure implemented in Algorithm~\ref{alg:envelope}.
        By repeatedly applying the characterization, we obtain the unique distribution over $\mathcal{O}^\Theta$ satisfying the conditions of the lemma, which is the output distribution $\{p(O)\}$ of Algorithm~\ref{alg:envelope}.
    \end{proof}

    Given Lemma~\ref{lem:envelope}, we then consider any distribution $\{p'(O)\}$ which (1) induces marginal probabilities $\{p_{i, j}\}$, (2) minimizes the expected cost, and (3) among all distributions satisfying (1) and (2), maximizes the potential function $w$, defined as
    \[
        w(p') = \sum_{O \in \mathcal{O}^\Theta} p'(O) \cdot \left(\sum_{i \in \Theta} O^i\right)^2.
    \]
    The goal is to show such a distribution $\{p'(O)\}$ satisfies the conditions of Lemma~\ref{lem:envelope}, and therefore coincides with $\{p(O)\}$, the output of Algorithm~\ref{alg:envelope}.
    Moreover, such a distribution has the additional property, that it minimizes the expected cost.
    In other words, the output distribution of Algorithm~\ref{alg:envelope} minimizes the expected cost, which is equivalent to the second claim at the beginning of the proof.

    To achieve the above goal, we only need to show that the distribution $\{p'(O)\}$ chosen above has the property, that no two combinations of outcomes in the support of $\{p'(O)\}$ cross.
    Suppose otherwise, i.e., there exist $O_1, O_2 \in \mathcal{O}^\Theta$, such that $O_1 \wedge O_2 \notin \{O_1, O_2\}$ and $p'(O_1) > p'(O_2) > 0$.
    Let $q = p'(O_2)$.
    We show that subtracting $q$ from $p'(O_1)$ and $p'(O_2)$ simultaneously and adding $q$ to $p'(O_1 \wedge O_2)$ and $p'(O_1 \vee O_2)$ simultaneously (1) preserves the marginal probabilities, (2) does not increase the expected cost of $\{p'(O)\}$, and (3) strictly increases the potential function $w$, thus leading to a contradiction.
    (1) clearly holds.
    (2) follows from the submodularity of $c$, i.e.,
    \[
        q \cdot (c(O_1) + c(O_2)) \ge q \cdot (c(O_1 \wedge O_2) + c(O_1 \vee O_2)).
    \]
    And finally, (3) follows from elementary calculation, i.e., whenever $O_1$ and $O_2$ cross,
    \[
        \left(\sum_{i \in \Theta} O_1^i\right)^2 + \left(\sum_{i \in \Theta} O_2^i\right)^2 < \left(\sum_{i \in \Theta} \min(O_1^i, O_2^i)\right)^2 + \left(\sum_{i \in \Theta} \max(O_1^i, O_2^i)\right)^2.
    \]
    This establishes second claim.
    
    As for the third claim, i.e., the existence of an efficient first-order oracle, observe that the distribution $\{\prob(O \mid \{p_{i, j}\})\}$ output by Algorithm~\ref{alg:envelope} is piecewise linear in $\{p_{i, j}\}$.
    On the other hand, the objective function is linear in the output distribution, and is therefore piecewise linear in $\{p_{i, j}\}$.
    This implies that (sub)gradients of the objective function can be easily computed, and concludes the proof of the theorem.
\end{proof}

We make a few remarks regarding Algorithm~\ref{alg:submodular-randomized}.
\begin{itemize}
    \item In addition to the ellipsoid method, one may also apply gradient-based methods, e.g., projected gradient descent, to solve the convex program in Algorithm~\ref{alg:submodular-randomized}.
    Gradient-based methods generally perform better in practice, and they usually have better dependence on $m$ and $n$ but worse (polynomial) dependence on $1 / \varepsilon$.
    \item Observe that Algorithm~\ref{alg:submodular-randomized} outputs a randomized mechanism such that, in its support, no two combinations of outcomes cross.
    Therefore, when restricted to binary outcomes ($m = 2$), Lemma~\ref{lem:submodular-binary} gives a way to round the randomized mechanism output by Algorithm~\ref{alg:submodular-randomized} into a deterministic one.
    This gives an alternative way (in addition to Corollary~\ref{cor:submodular-deterministic}) of computing optimal deterministic mechanisms restricted to binary outcomes.
\end{itemize}

\section{Omitted Definitions and Remarks in Section~\ref{sec:additive}}

\subsection{Single-Peaked Preferences}

Below we give a definition of single-peaked preferences.
\begin{definition}
    Let $O$ be the space of outcomes, $\Theta$ the space of types, and for each $i \in \Theta$, $u_i: O \to \mathbb{R}_+$ the utility function of an agent of type $i$.
    $\{u_i\}_i$ are single-peaked if there exists an ordering $\prec$ over $O$, such that for each type $i$ the following holds: there exists a most preferred outcome $o^i$.
    Moreover, for any two outcomes $o_{j_1} \prec o_{j_2}$,
    \begin{itemize}
        \item if $o_{j_2} \prec o^i$, then $u_i(o_{j_1}) \le u_i(o_{j_2}) \le u_i(o^i)$;
        \item if $o^* \prec o_{j_1}$, then $u_i(o^i) \ge u_i(o_{j_1}) \ge u_i(o_{j_2})$.
    \end{itemize}
\end{definition}
In words, the above definition says that the outcomes can be ordered in a line, such that for each type, there exists a most preferred outcome.
Moreover, on both sides of this most preferred outcome, the closer an outcome is to the most preferred the outcome, the higher the utility is for that outcome.

\subsection{Remarks on Algorithm~\ref{alg:additive-deterministic}}

We make two remarks regarding Algorithm~\ref{alg:additive-deterministic}.
\begin{itemize}
    \item For finding an optimal deterministic mechanism, the precise values of the agents' utility functions do not matter.
    Consequently, Algorithm~\ref{alg:additive-deterministic} works as long as all types {\em order} the outcomes in the same way.
    \item With minor modifications, Algorithm~\ref{alg:additive-deterministic} can handle {\em costly} misreporting, in which there is a fixed (non-negative) cost for type $i$ to report as type $i'$.
    Partial verification is a special case of costly misreporting: reporting either costs the agent $0$ or $\infty$, and the reporting structure $R$ is the set of all reporting actions which cost $0$.
    The key modification which allows Algorithm~\ref{alg:additive-deterministic} to handle costly misreporting is that the edges used to model the reporting structure can be diagonal (as opposed to horizontal), where the slope of the edge depends on each type's utility function and the cost of misreporting.
    We will not expand on this in the current paper.
\end{itemize}

\subsection{Remarks on Lemma~\ref{lem:additive-convex}}

We make a few remarks regarding Lemma~\ref{lem:additive-convex}.
\begin{itemize}
    \item The proof we present is a combination of several concrete arguments.
    There is an alternative relatively high-level, and sometimes more useful, interpretation of the lemma, which is based on a convex program formulation of the problem.
    We will make heavy use of this alternative interpretation in the rest of the paper, especially when dealing with randomized mechanisms.
    \item Throughout the paper we assume payments are not allowed.
    One may show that with payments, there always exists an optimal truthful mechanism that is deterministic, as long as both agents and the principal value payments linearly.
    Moreover, there exist relatively simple algorithms for computing an optimal mechanism with payments.
    We will not expand on this in the current paper.
\end{itemize}

\subsection{Remarks on Algorithm~\ref{alg:additive-randomized}}

We make a few remarks regarding Algorithm~\ref{alg:additive-randomized}.
\begin{itemize}
    \item Algorithm~\ref{alg:additive-randomized} gives a constructive proof that finding an optimal truthful mechanism is always no harder than finding an optimal truthful deterministic mechanism with convex costs.
    As a result, a faster algorithm for the latter problem would imply a faster algorithm for the former.
    \item As a byproduct, Algorithm~\ref{alg:additive-randomized} shows that in general, to achieve the minimum cost, it suffices to randomize only between two outcomes for each type, .
\end{itemize}

\section{Omitted Proofs in Section~\ref{sec:additive}}
\label{app:additive}

\begin{proof}[Proof of Theorem~\ref{thm:np-hardness-without-revelation}]
    We give a reduction from $\mathsf{MinSAT}$.
    Fix a $\mathsf{MinSAT}$ instance with $n$ variables, $\{x_i\}_{i \in [n]}$, and $m$ clauses, $\{C_j\}_{j \in [m]}$, and let $\ell_{j, k} \in C_j$ be the $k$-th literal in clause $C_j$.
    We construct an $\mathsf{AMD}$ instance as follows.
    \begin{itemize}
        \item Create a type for each variable, each literal, and each clause, i.e., $\Theta = \{x_i, x_i^+, x_i^-\}_{i \in [n]} \cup \{C_j\}_{j \in [m]}$.
        \item There are two possible outcomes, $\mathcal{O} = \{o^+, o^-\}$.
        Moreover, for any type $\theta \in \Theta$, $u_\theta(o^+) = 1$ and $u_\theta(o^-) = 0$.
        \item The principal's cost is as follows.
        \begin{itemize}
            \item For each literal $\ell$, $c_\ell(o^+) = c_\ell(o^-) = 0$.
            \item For each variable $x_i$, $c_{x_i}(o^+) = 0$ and $c_{x_i}(o^-) = m + 1$, so any optimal mechanism never assigns $o^-$ to a variable.
            \item For each clause $C_j$, $c_{C_j}(o^+) = 1$ and $c_{C_j}(o^-) = 0$, so any optimal mechanism minimizes the number of clauses which get outcome $o^+$.
        \end{itemize}
        \item The reporting structure $R$ is as follows.
        \begin{itemize}
            \item Each literal $\ell$ can only report itself.
            \item Each variable $x_i$ can report itself and its two literals $x_i^+$ and $x_i^-$.
            \item Each clause $C_j$ can report itself, all variables, and all literals $\ell_{j, k} \in C_j$ contained in $C_j$.
        \end{itemize}
    \end{itemize}
    
    Now consider the structure of optimal solutions for the above $\mathsf{AMD}$ instance.
    First observe that without loss of generality, any optimal solution assigns $o^+$ only to types which report literals.
    Moreover, for each variable $x_i$, any optimal solution assigns $o^+$ to exactly one of $x_i^+$ and $x_i^-$.
    So the problem boils down to choosing between the two literals for each variable.

    On the other hand, each clause $C_j$ will report any literal that is contained in $C_j$ and assigned outcome $o^+$, as long as possible.
    Whenever this happens, the principal incurs cost $1$ from this clause.
    In other words, the principal incurs cost $1$ from a clause iff one of the literals contained in the clause is assigned outcome $o^+$, i.e., iff the clause is satisfied.
    The total cost of the mechanism is $k$, where $k$ is the number of clauses satisfied.
    This encodes precisely the $\mathsf{MinSAT}$ instance.
\end{proof}

\begin{proof}[Proof of Theorem~\ref{thm:np-hardness-with-general-utility}]
    Consider the following reduction from $\mathsf{MinSAT}$.
    Fix a $\mathsf{MinSAT}$ instance with $n$ variables, $\{x_i\}_{i \in [n]}$, and $m$ clauses, $\{C_j\}_{j \in [m]}$, and let $\ell_{j, k} \in C_j$ be the $k$-th literal in clause $C_j$.
    We construct an $\mathsf{AMD}$ instance as follows.
    \begin{itemize}
        \item Create a type for each variable, each literal, and each clause, i.e., $\Theta = \{x_i, x_i^+, x_i^-\}_{i \in [n]} \cup \{C_j\}_{j \in [m]}$.
        \item There are three possible outcomes, $\mathcal{O} = \{o^+, o^-, o^0\}$.
        \item Let $N_1 >> N_2 > m$ be large (but polynomial in $n$ and $m$) numbers.
        The principal's cost is as follows.
        \begin{itemize}
            \item For each variable $x_i$, $c_{x_i}(o^+) = c_{x_i}(o^-) = 0$, and $c_{x_i}(o^0) = N_1$.
            As a result, an optimal mechanism never assigns $o^0$ to a variable.
            \item For each positive literal $\ell^+$, $c_{\ell^+}(o^+) = N_2$, $c_{\ell^+}(o^0) = 0$, and $c_{\ell^+}(o^-) = N_1$.
            For each negative literal $\ell^-$, $c_{\ell^-}(o^+) = N_1$, $c_{\ell^-}(o^0) = 0$, and $c_{\ell^-}(o^-) = N_2$.
            We will see later that for any variable $x_i$, an optimal mechanism assigns precisely one of its literals the outcome with cost $N_2$, and the other outcome $o^0$ with cost $0$.
            \item For each clause $C_j$, $c_{C_j}(o^+) = c_{C_j}(o^-) = 0$, and $c_{C_j}(o^0) = 1$.
        \end{itemize}
        \item The types' utility functions are as follows.
        \begin{itemize}
            \item For each variable $x_i$, $u_{x_i}(o^+) = u_{x_i}(o^-) = u_{x_i}(o^0)$.
            Note that the numerical values of the utility functions do not matter for deterministic mechanisms.
            \item For each positive literal $\ell^+$, $u_{\ell^+}(o^+) > u_{\ell^+}(o^0) > u_{\ell^+}(o^-)$.
            For each negative literal $\ell^-$, $u_{\ell^-}(o^+) < u_{\ell^-}(o^0) < u_{\ell^-}(o^-)$.
            \item For each clause $C_j$, $u_{C_j}(o^0) > u_{C_j}(o^+) = u_{C_j}(o^-)$.
        \end{itemize}
        \item The reporting structure $R$ is as follows.
        \begin{itemize}
            \item Each variable $x_i$ can only report itself.
            \item Each literal $\ell$ can report itself or the variable it corresponds to.
            \item Each clause $C_j$ can report itself, any literal $\ell \in C_j$ contained in the clause, or the variable $\ell$ corresponds to.
        \end{itemize}
    \end{itemize}

    Now consider the structure of optimal deterministic mechanisms.
    For each variable $x_i$, an optimal mechanism assigns either $o^+$ or $o^-$.
    Moreover, for $x_i$'s two literals, if $x_i$ is assigned $o^+$ (resp.\ $o^-$), then the mechanism always assigns $x_i^+$ $o^+$ (resp.\ $o^0$) and $x_i^-$ $o^0$ (resp.\ $o^-$).
    One may check this is the only way to minimize cost subject to incentive compatibility.
    So conceptually, the mechanism chooses exactly one value for each variable, where assigning $o^0$ to $x_i^+$ (resp.\ $x_i^-$) corresponds to choosing value $1$ (resp.\ $0$) for $x_i$.

    For each clause $C_j$, if any of the literals contained in $C_j$ is chosen (i.e., is assigned outcome $o^0$), then to prevent $C_j$ from misreporting that literal, the mechanism must assign $C_j$ outcome $o^0$, at a cost of $1$.
    This corresponds to the case where the clause is satisfied.
    Otherwise, if none of the literals in $C_j$ is chosen, the mechanism assigns either $o^+$ or $o^-$ to $C_j$, at a cost of $0$.
    The total cost of the mechanism is then $n N_2 + k$, where $k$ is the number of clauses satisfied.
    This encodes precisely the $\mathsf{MinSAT}$ instance.
\end{proof}

\begin{proof}[Proof of Theorem~\ref{thm:additive-deterministic}]
    First consider the runtime of Algorithm~\ref{alg:additive-deterministic}.
    The bottleneck is finding an $s$-$t$ min-cut on the graph $G$, which has $mn + 2$ vertices and at most $mn^2 + (m + 1)n$ edges.
    Therefore, it is sufficient to show that one can replace the infinite capacities with capacity $W + 1$.
    
    We first prove that any horizontal edge with capacity $W + 1$ does not belong to any min-cut.
    Suppose in some min-cut, for some $j \in [m]$, a horizontal edge from $(i_1, o_j) \in S$ to $(i_2, o_j) \notin S$ is cut.
    We argue that including all out-neighbors of $(i_1, o_j)$ through horizontal edges into $S$ strictly decreases the capacity of the cut.
    For each of these horizontal out-neighbors, by including it in $S$, we decrease the cut value by $W + 1$ (from one horizontal edge), and possibly incur an additional cost from the edge between that neighbor and its vertical out-neighbor, whose capacity is at most $W$.
    Because we take the transitive closure of $R$, the newly included vertices do not have any horizontal out-neighbor out of $S$, so the total cost decreases at least by $1$.
    A similar argument shows that edges leaving $S$ can be replaced to have capacity $W+1$ as well.

    Now we move on to proving the correctness of Algorithm~\ref{alg:additive-deterministic}.
    We assume the infinite-capacity edges still have capacity $\infty$ (rather than $W+1$), which simplifies our argument.
    Observe that with infinite capacities, taking the transitive closure of $R$ in Line~3-5 of Algorithm~\ref{alg:additive-deterministic} makes no difference.
    We prove the correctness for the algorithm without this step.

    The argument consists of two parts.
    First we show there is a one-to-one correspondence between all finite-capacity \emph{downward-closed} $s$-$t$ cuts and all deterministic truthful mechanisms, where the capacity of the cut is the same as the cost of the mechanism.
    We then show that taking the \emph{downward closure} of any cut does not increase its capacity, and as a result, we only need to consider downward-closed cuts.
    These two claims together imply the correctness of Algorithm~\ref{alg:additive-deterministic}.

    Formally, a cut $(S, \overline{S})$ is downward closed, if for any $i \in \Theta$ and $1 \le j_1 < j_2 \le m$,
    \[
        (i, o_{j_2}) \in S \; \Longrightarrow \; (i, o_{j_1}) \in S.
    \]
    Fix a downward closed cut $(S, \overline{S})$, we construct a mechanism $M: \Theta \to \mathcal{O}$ in the same way as in Line~17-19 of Algorithm~\ref{alg:additive-deterministic}.
    That is, for all $i \in \Theta$,
    \[
        M(i) = \max\{o_{j'} \in \mathcal{O} \mid (i, o_{j'}) \in S\}.
    \]
    The one-to-one correspondence follows immediately from the definition of $M$.
    We now argue $(S, \overline{S})$ has finite capacity iff $M$ is truthful.
    Notice that $(S, \overline{S})$ has finite capacity iff no horizontal edge is cut, i.e., iff $M(i_1) \ge M(i_2)$ for all $(i_1, i_2) \in R$, which is precisely the condition for the truthfulness of $M$.
    Moreover, whenever $(S, \overline{S})$ has finite capacity, the capacity is equal to the cost of the truthful mechanism $M$.

    Now we prove the second claim, i.e., taking the downward closure does not increase the capacity of the cut.
    We first define the downward closure.
    Given any $s$-$t$ cut $(S, \overline{S})$, the downward closure $(C(S), \overline{C(S)})$ is defined such that for all $i \in \Theta$,
    \[
        (i, o_j) \in C(S) \iff j \le \max\{j' \in [m] \mid (i, o_{j'}) \in S\}.
    \]
    We show below that the capacity of $C(S)$ is no larger than that of $S$.

    If some horizontal edge is cut in $S$, then the cut has capacity $\infty$ and the claim is trivial.
    Suppose no horizontal edge is cut in $S$.
    Because the set of vertical edges cut in $C(S)$ is a subset of those cut in $S$, we only need to show that no horizontal edge is cut in $C(S)$.
    Suppose for contradiction that some horizontal edges are cut in $C(S)$.
    Let $i_1, i_2 \in \Theta$ and $j \in [m]$ be such that $e = ((i_1, o_j), (i_2, o_j))$ is one of the highest horizontal edges being cut in $C(S)$.
    By the choice of $e$, it must be the case that $(i_1, o_j) \in S$, and since $S \subseteq C(S)$, we have $(i_2, o_j) \notin S$.
    Therefore, the same edge $e$ is also cut in $S$, which leads to a contradiction.
\end{proof}

\begin{proof}[Proof of Lemma~\ref{lem:additive-convex}]
    We prove the lemma by construction.
    Let $M$ be any (possibly randomized) optimal truthful mechanism.
    We construct a deterministic truthful mechanism from $M$ whose cost is no larger than that of $M$.

    First we show it suffices for $M$ to randomize between only two consecutive outcomes for each type $i$.
    Let $p_i(o_j)$ be the probability that type $i$ receives outcome $o_j$.
    Suppose for some type $i$, there exist $j_1, j_2 \in [m]$, where $j_2 - j_1 > 1$, $p_i(o_{j_1}) > 0$, and $p_i(o_{j_2}) > 0$.
    We argue that one can move probability mass from $o_{j_1}$ and $o_{j_2}$ to $o_{j_3}$, where $j_3 = j_1 + 1$ lies between $j_1$ and $j_2$, without violating truthfulness or increasing the total cost.

    Let $0 < \alpha < 1$ be such that $o_{j_3} = \alpha o_{j_1} + (1 - \alpha) o_{j_2}$.
    Without loss of generality suppose $p_i(o_{j_1}) / \alpha \le p_i(o_{j_2}) / (1 - \alpha)$.
    For brevity let $p = p_i(o_{j_1})$.
    We show that the following operation achieves the above goal.
    \begin{itemize}
        \item Decrease $p_i(o_{j_1})$ by $p$.
        \item Decrease $p_i(o_{j_2})$ by $(1 - \alpha) \cdot p / \alpha \le p_i(o_{j_2})$.
        \item Increase $p_i(o_{j_3})$ by $p / \alpha$.
    \end{itemize}
    Observe that (1) after the operation, the probabilities of each outcome still sum to $1$, and (2) type $i$ receives exactly the same expected utility.
    The principal's cost changes by
    \begin{align*}
        & \quad \frac{p}{\alpha} \cdot c_i(o_{j_3}) - p \cdot c_i(o_{j_1}) - \frac{(1 - \alpha) \cdot p}{\alpha} \cdot c_i(o_{j_2}) \\
        & = \frac{p}{\alpha} \cdot \left(c_i(o_{j_3}) - \alpha \cdot c_i(o_{j_1}) - (1 - \alpha) \cdot c_i(o_{j_2}) \right) \\
        & = \frac{p}{\alpha} \cdot \left(c_i^\ell(o_{j_3}) - \alpha \cdot c_i^\ell(o_{j_1}) - (1 - \alpha) \cdot c_i^\ell(o_{j_2}) \right) \tag{$c_i^\ell$ extends $c_i$} \\
        & \le \frac{p}{\alpha} \cdot \left(c_i^\ell(o_{j_3}) - c_i^\ell(\alpha \cdot o_{j_1} + (1 - \alpha) \cdot o_{j_2}) \right) \tag{convexity of $c_i^\ell$} \\
        & = \frac{p}{\alpha} \cdot \left(c_i^\ell(o_{j_3}) - c_i^\ell(o_{j_3}) \right) = 0. \tag{$o_{j_3} = \alpha \cdot o_{j_1} + (1 - \alpha) \cdot o_{j_2}$}
    \end{align*}
    In other words, the total cost does not increase.

    We then apply the above operation in the following way.
    Fix $i$, and let $j_- = \min \{j \mid p_i(o_j) > 0\}$, and $j_+ = \max \{j \mid p_i(o_j) > 0\}$.
    As long as $j_+ - j_- > 1$, apply the operation to $i$, $j_-$ and $j_+$.
    Observe that each time we apply the operation, $j_+ - j_-$ decreases by at least $1$, so eventually we must stop and $j_+ - j_- \le 1$.
    Performing this for each $i$ yields a mechanism which randomizes only between two consecutive outcomes for each type, without increasing the total cost.
    Without loss of generality, from now on, we assume $M$ has this property.

    Now we show there is a way to round $M$, producing a distribution over deterministic truthful mechanisms, such that the expected cost of this distribution is precisely the cost of $M$.
    As a result, there exists one mechanism in the support of the distribution, whose cost is upper bounded by that of $M$, which is our desired deterministic truthful mechanism.
    
    For each type $i$, let $j_i = \min \{j \mid p_i(j) > 0\}$ and $\alpha_i = p_i(j_i + 1)$.
    Note that $0 \le \alpha_i < 1$ for all $i$.
    For any $r \in [0, 1]$, let $M_r$ be the deterministic mechanism defined such that for each type $i$,
    \[
        M_r(i) =
        \begin{cases}
            o_{j_i + 1}, & r \le \alpha_i, \\
            o_{j_i}, & \text{otherwise}.
        \end{cases}.
    \]
    We first argue that $M_r$ is truthful for any $r \in [0, 1]$.
    Fix any pair $(i_1, i_2) \in R$.
    Given that $M$ itself is truthful, we proceed by considering the following two cases.
    \begin{itemize}
        \item $j_{i_1} > j_{i_2}$.
        In such cases, we always have $M_r(i_1) \ge M_r(i_2)$, so $i_1$ has no incentive to report $i_2$.
        \item $j_{i_1} = j_{i_2}$ and $\alpha_{i_1} \ge \alpha_{i_2}$.
        For any $r \in [0, 1]$, we have
        \[
            r \le \alpha_{i_1} \; \Longleftarrow \; r \le \alpha_{i_2}.
        \]
        So again, regardless of $r$, $M_r(i_1) \ge M_r(i_2)$.
    \end{itemize}
    Applying the above argument to each pair $(i_1, i_2) \in R$ establishes the truthfulness of $M_r$ for any $r \in [0, 1]$.

    Now consider the distribution over deterministic mechanisms $M_r$ when $r$ is uniformly distributed over $[0, 1]$.
    We show that the expected cost of $M_r$ is equal to the cost of $M$:
    \begin{align*}
        \mathbb{E}_r[c(M_r)] & = \sum_{i \in \Theta} \left(\Pr[r \le \alpha_i] \cdot c_i(o_{j_i + 1}) + \Pr[r > \alpha_i] \cdot c_i(o_{j_i})\right) \\
        & = \sum_{i \in \Theta} (\alpha_i \cdot c_i(o_{j_i + 1}) + (1 - \alpha_i) \cdot c_i(o_{j_i})) \\
        & = \sum_{i \in \Theta} (p_i(j_i + 1) \cdot c_i(o_{j_i + 1}) + p_i(j_i) \cdot c_i(o_{j_i})) \\
        & = \sum_{i \in \Theta} \mathbb{E}_M[c_i(M(i))] \\
        & = c(M),
    \end{align*}
    which concludes the proof.
\end{proof}

\begin{proof}[Proof of Theorem~\ref{thm:additive-randomize}]
    We prove the correctness first.
    Observe that the problem of finding a (randomized) optimal mechanism can be written as the following linear program.
    \[
    \begin{aligned}
        \min \qquad & \sum_{i \in \Theta = [n], j \in [m]} p_{i, j} \cdot c_i(o_j) & \\
        \text{subject to} \qquad & u_i = \sum_j p_{i, j} \cdot o_j & \forall i \in \Theta \\
        & u_{i_1} \ge u_{i_2} & \forall (i_1, i_2) \in R \\
        & \sum_j p_{i, j} = 1 & \forall i \in \Theta \\
        & p_{i, j} \ge 0 & \forall i \in \Theta, j \in [m].
    \end{aligned}
    \]
    Here, $u_i$ is the expected utility of type $i$, and $p_{i, j}$ is the probability of assigning type $i$ outcome $o_j$.
    This is not the most succinct LP formulation of the problem, but it capture the structure of the problem in a way that is more useful for our analysis.

    Now fix $\{u_i\}_i$ and consider the optimal choice of $\{p_{i, j}\}_{i, j}$.
    This can be solved separately for each type $i$, by considering the following linear program (with the additional constraints that $\{p_{i, j}\}_j$ are nonnegative and sum up to $1$ for all $i$).
    \begin{align*}
        \min\quad & \sum_j p_{i, j} \cdot c_i(o_j) \\
        \text{s.t.}\quad & \sum_j p_{i, j} \cdot o_j = u_i.
    \end{align*}
    This is precisely evaluating the convex envelope $c_i^-$ of $c_i$ at $u_i \in [o_1, o_m]$.
    Consequently, the problem of finding a (randomized) optimal mechanism can be rewritten as the following convex program.
    \[
    \begin{aligned}
        \min\quad & \sum_{i \in \Theta} c_i^-(u_i) & \\
        \text{s.t.}\quad & u_{i_1} \le u_{i_2}, & \forall (i_1, i_2) \in R, \\
        & u_i \in [o_1, o_m], & \forall i \in \Theta.
    \end{aligned}
    \]

    Now observe that the reformulated program cannot distinguish between $\{c_i\}_{i \in \Theta}$ and $\{\widehat{c_i}\}_{i \in \Theta}$, where $\widehat{c_i}$ is $c_i^-$ restricted to $\mathcal{O}$ as in Algorithm~\ref{alg:additive-randomized} --- the two cost functions simply induce exactly the same program.
    Moreover, observe that the newly constructed cost function $\{\widehat{c_i}\}_{i \in \Theta}$ is convex, according to Definition~\ref{def:convex-costs}.
    Given this convexity, Lemma~\ref{lem:additive-convex} implies that there exists a deterministic mechanism for $\{\widehat{c_i}\}_i$ which is optimal.
    In other words, there exists an optimal solution $\{u_i\}_i$ to the reformulated program in which $u_i \in \mathcal{O}$ for each $i \in \Theta$.
    Algorithm~\ref{alg:additive-randomized} finds such a solution $\{u_i\}_i$ by calling Algorithm~\ref{alg:additive-deterministic}.

    Now the only problem left is to recover $\{p_{i, j}\}_j$ from $u_i$ for each type $i$.
    This is done in Line~6 of Algorithm~\ref{alg:additive-randomized}.
    Since the output mechanism $M$ implements $\{u_i\}_i$ in an optimal way, it is a truthful mechanism that minimizes the principal's cost.
    This establishes the correctness of Algorithm~\ref{alg:additive-randomized}.

    Now we consider the time complexity.
    Compared to Algorithm~\ref{alg:additive-deterministic}, the additional steps in Algorithm~\ref{alg:additive-randomized} include (1) computing $\{\widehat{c_i}\}_i$ in Line~2, and (2) interpreting $\widehat{M}(i)$ as an optimal convex combination of outcomes in Line~6.
    We will show that both operations can be done in time $O(mn)$, i.e., linear in the size of the input.
    The time complexity of Algorithm~\ref{alg:additive-randomized} then follows immediately.

    For computing the convex envelope of $c_i$, one may use the classical algorithm by \citet{andrew1979another}, which scans $\mathcal{O}$ from left to right, and maintains a stack containing the partial convex envelope of $c_i$ on $\{o_1, \dots, o_j\}$ for every $j \in [m]$.
    The algorithm runs in time $O(m)$.
    
    Once we know $\{\widehat{c_i}\}$, to find an optimal convex combination for a target utility $u_i$, we first find in time $O(m)$ the largest integer $\ell \in [m]$ such that $o_\ell \le u_i$ and $c_i(o_\ell) = \widehat{c_i}(o_\ell)$, and the smallest integer $r \in [m]$ such that $o_r \ge u_i$ and $c_i(o_r) = \widehat{c_i}(o_r)$.
    If $\ell = r$, then we output $o_\ell = o_r$.
    Otherwise, there is a unique $\alpha \in (0, 1)$ such that randomizing between $o_\ell$ and $o_r$ gives expectation $\alpha \cdot o_\ell + (1 - \alpha) \cdot o_r = u_i$.
    The convex envelope is linear between $o_l$ and $o_r$, and hence $\alpha \cdot c_i(o_\ell) + (1 - \alpha) \cdot c_i(o_r) = \widehat{c_i}(u_i)$.
    Then, we can set $M(i)$ to $o_\ell$ with probability $\alpha$ and to $o_r$ with probability $(1 - \alpha)$.
    Performing the above for every type $i$ takes $O(mn)$ time.
\end{proof}


\end{document}